\documentclass[a4paper,english,numberwithinsect]{lipics-v2021}
\pdfoutput=1   
\hideLIPIcs
\nolinenumbers
\usepackage[commandnameprefix=ifneeded]{changes}

\title{Improved Bounds for Discrete Voronoi Games}
\author{Mark de Berg}{Department of Mathematics and Computer Science, TU Eindhoven, the Netherlands}{M.T.d.Berg@tue.nl}{https://orcid.org/0000-0001-5770-3784}{MdB is supported by the  Dutch Research Council (NWO) through Gravitation-grant NETWORKS-024.002.003.}
\author{Geert {van Wordragen}}{Department of Computer Science, Aalto University, Espoo, Finland}{Geert.vanWordragen@aalto.fi}{https://orcid.org/0000-0002-2650-638X}{}
\authorrunning{M.~de Berg, and G. van Wordragen} 
\titlerunning{Improved Bounds for Discrete Voronoi Games}
\Copyright{Mark de Berg and Geert van Wordragen}
\ccsdesc[100]{Theory of computation~Design and analysis of algorithms}
\keywords{Voronoi games, competitive facility location, spatial voting theory}
\relatedversion{A preliminary version of this work appeared in WADS 2023~\cite{DBLP:conf/wads/BergW23}} 

\usepackage[noend]{algpseudocode}
\usepackage{algorithm}
\usepackage{xspace}
\usepackage{amsmath, amssymb, amsfonts}
\usepackage{hyperref, cleveref, thm-restate}
\usepackage{xcolor}
\usepackage{graphicx}

\graphicspath{{./figures/}}

\renewcommand{\epsilon}{\varepsilon}
\newcommand{\etal}{\emph{et~al.}\xspace}
\newcommand{\bigO}[1]{\mathcal{O}\left(#1\right)}

\newcommand{\epsr}[1]{\epsilon_{#1}^{\square}}

\DeclareMathOperator{\dist}{dist}
\DeclareMathOperator{\myroot}{root}
\newcommand{\size}{\mathit{size}}
\newcommand{\plus}{\mathit{plus}}

\newcommand{\B}{\mathcal{B}}
\newcommand{\R}{\mathcal{R}}
\newcommand{\T}{\mathcal{T}}

\newcommand{\eps}{\varepsilon}
\renewcommand{\epsilon}{\eps}

\newcommand{\REAL}{\ensuremath{\mathbb{R}}}
\newcommand{\Reals}{\REAL}

\renewcommand{\leq}{\leqslant}
\renewcommand{\geq}{\geqslant}

\newcommand{\Vor}{\mathrm{Vor}}
\newcommand{\Pone}{$\mathcal{P}$\xspace}
\newcommand{\Ptwo}{$\mathcal{Q}$\xspace}

\newcommand{\ceil}[1]{\left\lceil #1 \right\rceil}
\newcommand{\NE}{\mbox{{\sc ne}}}
\newcommand{\SE}{\mbox{{\sc se}}}
\newcommand{\NW}{\mbox{{\sc nw}}}
\newcommand{\SW}{\mbox{{\sc sw}}}
\newcommand{\pa}{\mathrm{pa}}   

\newcommand{\Bl}{B_{\mathrm{left}}}
\newcommand{\Bu}{B_{\mathrm{up}}}
\newcommand{\Br}{B_{\mathrm{right}}}
\newcommand{\Bd}{B_{\mathrm{down}}}
\newcommand{\Rl}{\rho_{\mathrm{left}}}
\newcommand{\Ru}{\rho_{\mathrm{up}}}
\newcommand{\Rr}{\rho_{\mathrm{right}}}
\newcommand{\Rd}{\rho_{\mathrm{down}}}
\newcommand{\Vl}{V_{\mathrm{left}}}
\newcommand{\Vu}{V_{\mathrm{up}}}
\newcommand{\Vr}{V_{\mathrm{right}}}
\newcommand{\Vd}{V_{\mathrm{down}}}
\newcommand{\ba}{($\B$.1)\xspace}
\newcommand{\bb}{($\B$.2)\xspace}

\newcommand{\vl}{v_{\mathrm{left}}}
\newcommand{\vu}{v_{\mathrm{up}}}

\newcommand{\Bnew}{\B_{\mathrm{new}}}

\begin{document}

\maketitle

\begin{abstract}
In the planar one-round discrete Voronoi game, two players \Pone and \Ptwo compete over a set $V$ of $n$~voters 
represented by points in~$\Reals^2$. First, \Pone places a set $P$ of $k$ points, then
\Ptwo places a set $Q$ of~$\ell$ points, and then each voter~$v\in V$ is won by the player
who has placed a point closest to~$v$. 
It is well known that if $k=\ell=1$, then \Pone can always win $n/3$ voters and that this
is worst-case optimal. We study the setting where $k>1$ and $\ell=1$. We present lower 
bounds on the number of voters that \Pone can always win, which improve the existing
bounds for all~$k\geq 4$. As a by-product, we obtain improved bounds on small $\eps$-nets for convex ranges.
These results are for the $L_2$~metric. We also obtain lower bounds
on the number of voters that \Pone can always win when distances are measured in the $L_1$~metric.
\end{abstract}

\section{Introduction}\label{ch:introduction}

\subparagraph{Background and motivation.}
In the \emph{discrete Voronoi game}, two players compete over a set~$V$ of $n$ voters 
in~$\Reals^d$. First, player~\Pone places a set $P$ of $k$ points, then
player~\Ptwo places a set~$Q$ of $\ell$ points disjoint from the points 
in~$P$, and then each voter~$v\in V$ is won by the player who has placed 
a point closest to~$v$. In other words, each player wins the voters 
located in its Voronoi cells in the Voronoi diagram~$\Vor(P\cup Q)$. 
In case of ties, that is, when a voter~$v$ lies on the boundary between a Voronoi cell 
owned by~\Pone and a Voronoi cell owned by~\Ptwo, then $v$ is won by player~\Pone. 
Note that \Pone first places all their $k$~points and then \Ptwo places their~$\ell$ points---hence,
this is a \emph{one-round Voronoi game}---and
that $k$ and $\ell$ need not be equal. The one-round discrete Voronoi game was 
introduced by Banik~\etal~\cite{10.1007/978-3-642-22685-4_19}.

There is also a version of the Voronoi game where the players compete over a 
continuous region~\cite{10.1007/3-540-44679-6_26,10.1007/978-3-030-68211-8_9,10.1007/s00454-003-2951-4}.
For this version a multiple-round variant, where $k=\ell$ and the players
 place points alternatingly, has been studied as well.
We will confine our discussion to the discrete one-round game.
\medskip

The discrete one-round Voronoi game for $k=\ell=1$ is closely related to 
the concept of plurality points in spatial voting theory~\cite{10.2307/3689280}. In this
theory, there is a $d$-dimensional policy space,
and voters are modelled as points indicating their preferred policies.
A \emph{plurality point} is then a proposed policy that would win at
least $\ceil{n/2}$ voters against any competing policy. Phrased in terms of
Voronoi games, this means that \Pone can place a single point 
that wins at least $\ceil{n/2}$ voters against any single point 
placed by~\Ptwo. The discrete
Voronoi game with $k>1$ and $\ell=1$ can be thought of as an election
where a coalition of~$k$ parties is colluding against a single other party.
%

Another way to interpret Voronoi games is as a \emph{competitive facility-location problem},
where two companies want to place facilities so as to attract as many customers as possible, 
where each customer will visit the nearest facility. Competitive facility 
location has not only been studied in a (discrete and continuous) spatial setting, but
also in a graph-theoretic setting; see e.g.~\cite{BANDYAPADHYAY2015270,10.1007/978-3-540-75520-3_4,4100138}.

\subparagraph{Previous work.}
The one-round discrete Voronoi game leads to interesting algorithmic as well as
combinatorial problems.
\medskip

The algorithmic problem is to compute an optimal set of locations for the players. 
More precisely, for player~\Pone the goal is to compute, given a set~$V$ of $n$ voters, 
a set~$P$ of $k$ points that wins a maximum number of voters under the assumption 
that player~\Ptwo responds optimally.
For player~\Ptwo the goal is to compute, given a voter set~$V$ and a set $P$
of points placed by~\Pone, a set~$Q$ of $\ell$~points that wins as many voters
from $V$ as possible. These problems were studied in~$\Reals^1$ by 
Banik~\etal~\cite{10.1007/978-3-642-22685-4_19} for the case $k=\ell$. 
They showed that an optimal set for~\Pone can be computed in $O(n^{k-\lambda_k})$ time,
for some $0<\lambda_k<1$, and that an optimal set for~\Ptwo can be computed in $O(n)$ if the voters are given in
sorted order. The former result was improved by De~Berg~\etal~\cite{deberg2019oneround},
who presented an algorithm with $O(k^4 n)$ running time. They also showed that
in $\Reals^2$ the problem for~\Pone is $\Sigma_2^P$-hard. 
The problem for~\Pone in the special case $k=\ell=1$, is equivalent 
to finding the so-called Tukey median of~$V$. This can be done in $O(n^{d-1} + n \log n)$ time,
as shown by Chan~\cite{10.5555/982792.982853}.
\medskip

The combinatorial problem is to prove worst-case bounds on the number of voters that player~\Pone can 
win, assuming player~\Ptwo responds optimally.
Tight bounds are only known for~$k=\ell=1$, where Chawla~\etal\cite{CHAWLA2006499} 
showed the following: for any set $V$ of $n$ voters in~$\Reals^d$, player~\Pone can win 
at least $\ceil{n/(d+1)}$ voters and at most~$\ceil{n/2}$ voters, and these bounds are tight.
Situations where \Pone can win $\ceil{n/2}$ voters are particularly interesting, as these
correspond to the existence of a plurality point in voting theory. The bounds just mentioned
imply that a plurality point does not always exist. In fact, a plurality point only
exists for certain very symmetric point sets, as shown by Wu~\etal~\cite{DBLP:conf/isaac/WuLWC13}. 
De~Berg~\etal\cite{10.1145/3186990} showed how to test in $O(n \log n)$ time if a voter set admits a plurality point.
\medskip

The combinatorial problem for $k>1$ and $\ell=1$ was studied by Banik~\etal~\cite{BANIK201641}.
They showed that player~\Pone will never be able to win more than~$\left(1-\frac1{2k}\right)n$ voters,
because player~\Ptwo can always win at least half of the voters of the most crowded
Voronoi cell in~$\Vor(P)$. Banik~\etal~\cite{BANIK201641} present two methods to derive
lower bounds on the number of voters that~\Pone can always win. Below we discuss their
results in~$\Reals^2$, but we note that they generalize their methods to~$\Reals^3$.

The first method uses
a (weak) $\eps$-net for convex ranges on the voter set~$V$, that is, a point set~$N$ such that any
convex range~$R$ containing at least~$\eps n$ voters, will also contain a point from~$N$. 
Now, if $|Q|=1$ then the voters won by~\Ptwo lie in a single Voronoi cell in $\Vor(P\cup Q)$.
Since Voronoi cells are convex, this means that if we set~$P:= N$ then \Pone wins
at least~$(1-\eps)n$ voters. Banik~\etal use the $\eps$-net construction for convex ranges
by Mustafa and Ray~\cite{Mustafa2009}. There is no closed-form expression for the size of their $\eps$-net, 
but the method can give a $(4/7)$-net of size~2, for instance, and an $(8/15)$-net of size~3. 
The smallest size for which they obtain an $\eps$-net for some $\eps\leq 1/2$, which corresponds to
\Pone winning at least half the voters, is~$k=5$. Banik~\etal show that 
the $\epsilon$-net of Mustafa and Ray can be constructed in $O(kn \log^4 n)$ time.
The second method of Banik~\etal uses an $\eps$-net for disks, instead of convex sets.
This is possible because one can show that a point $q \in Q$ that wins $\alpha$ voters,
must have a disk around it that covers at least $\left\lfloor \alpha/6 \right\rfloor$ voters 
without containing a point from~$P$. Banik~\etal then present a
$(7/k)$-net for disks of size $k$, which can be constructed in $O(n^2)$ time. 
This gives a method that ensures \Pone wins at least $\left(1 - \frac{42}{k}\right)n$ voters,
which is better than the first method when $k \geq 137$.
%




\subparagraph{Our results.}
\begin{table}[t]
    \centering
    \begin{tabular}{r|c|c|c|c|c|c|l}
         metric & $k=1$ & $k=2$ & $k=3$ & $k=4$ & $k=5$ & arbitrary $k$ & reference \\ \hline
         $L_2$ & $1/3$ & $3/7$ & $7/15$ & $15/31$ & $21/41$ & $1-\frac{42}{k}$ & Banik \etal\cite{BANIK201641} \\[0.5mm]
               &       &       &         & $1/2$ & $11/21$ & $1-\frac{20\frac58 }{k}$ & this paper \\[0.5mm]
       \hline 
       $L_1$   & $1/2$ & $3/5$  & $2/3$ & $5/7$ & $3/4$ &  $1-\frac{6\frac{6}{7}}{k}$  & this paper \\[0.5mm]
    \end{tabular}
    \vspace*{2mm}
    \caption{Lower bounds on the fraction of voters that \Pone can win on any voter set in~$\Reals^2$,
              when \Pone has $k$ points and \Ptwo has a single point.
              }
    \label{tab:lowk_allbounds}
\end{table}
We study the combinatorial question---how many voters can 
player~\Pone win from any voter set $V$ of size $n$, under optimal play from~\Ptwo---in the planar setting, for $k>1$ and $\ell=1$.
We obtain the following results, where we assume that $V$ is in general position---no three voters are collinear---and
that $n$ is even.
 
In Section~\ref{sec:lowk} we present an improvement
over the $\eps$-net bounds by
      Mustafa and Ray~\cite{Mustafa2009} for convex ranges. This
      improves the results of Banik~\etal~\cite{BANIK201641}
      on the fraction of voters that \Pone can win when $k\geq 4$ and $k$
      is relatively small. We do not have a closed-form expression for the
      size of our $\eps$-net as function of~$\eps$. Theorem~\ref{thm:l2lowk}
      gives a recurrence on these sizes, and Table~\ref{tab:lowk_allbounds} shows 
      how our bounds compare to those of Banik~\etal for~$k=4,5$
      (which follow from the bounds of Mustafa and Ray~\cite{Mustafa2009}). It is particularly interesting that
      our bounds improve the smallest $k$ for which \Pone can win at least half the voters,
      from $k=5$ to $k=4$.

In Section~\ref{sec:highk} we present a new strategy for player~\Pone. Unlike the strategies
      by Banik~\etal, it is not based on $\eps$-nets. Instead, it uses a quadtree-based approach.
      By combining this approach with several other ideas and using our
      $\eps$-net method as a subroutine, we are able to show that there is
      a set $P$ of $k$ points that guarantees that \Pone wins at least $\left(1-\frac{20\frac58}{k}\right)n - 6$ voters,
      which significantly improves the $\left(1-\frac{42}{k}\right)n$ bound of Banik~\etal
\begin{figure}[t]
    \centering
    \includegraphics[width=\textwidth]{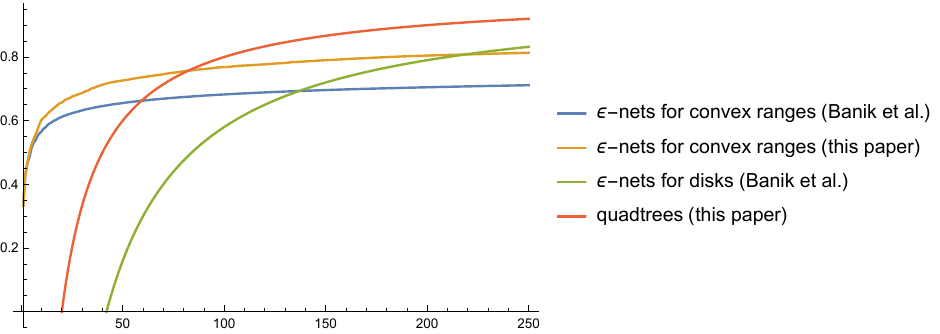}
    \caption{Lower bounds on the fraction of voters that \Pone can win 
             as a function of $k$ (the number of points of~\Pone) when \Ptwo has a single point, 
             for the $L_2$-metric. The red and green graphs do not intersect, so for large~$k$
             the quadtree method gives the best solution.}
    \label{fig:comparisonplot}
\end{figure}
Fig.~\ref{fig:comparisonplot} show the bounds obtained by the various methods in a graphical way.
\medskip

We also study the discrete one-round Voronoi game
in the $L_1$-metric, for $k>1$ and $\ell=1$. When $k=1$, player~\Pone can 
win at least half the voters 
by placing a point on a multi-dimensional median, that is, a point whose $x$- and $y$-coordinate
are medians among the $x$- and $y$-coordinates of the voter set~$V$~\cite{10.1145/3186990}.
The case $k>1$ and $\ell=1$ has, as far as we know, not been studied so far.
We first observe that for the $L_1$-metric, an $\epsilon$-net for axis-parallel rectangles
can be used to obtain a good set of points for player~\Pone. Using known results~\cite{Aronov2009}
this implies the results for $2\leq k\leq 5$ in Table~\ref{tab:lowk_allbounds}.
We also show that the quadtree-based approach
gives a lower bound of $\left( 1-\frac{6\frac{6}{7}}{k}\right) n$ on the fraction of voters that \Pone can win
in the $L_1$-metric.


\section{Better \texorpdfstring{$\varepsilon$}{ε}-nets for convex ranges}
\label{sec:lowk}
Below we present a new method to construct an $\eps$-net for convex ranges 
in the plane, which improves the results of Mustafa and Ray~\cite{Mustafa2009}. 
As mentioned in the introduction, this implies improved bounds on the number of voters
\Pone can win with $k$ points when \Ptwo has a single point, for relatively small values of~$k$.
Key to our $\eps$-net construction is the following partition result,
given for measures by Ceder~\cite{Ceder64}
(which also follows by generalising the proof of Bukh~\cite{Bukh2006} for the case $\alpha=\beta=\gamma$):
\begin{theorem}[Theorem 3 of \cite{Ceder64}] \label{thm:sixsplitcont}
    Let $\mu$ be a finite, absolutely continuous measure on~$\Reals^2$. 
    For any given $\alpha,\beta,\gamma \geq 0$ such that $2\alpha+2\beta+2\gamma=\mu(\Reals^2)$,
    we can find a set of three concurrent lines that partitions the plane into six wedges with measure $\alpha, \beta, \gamma, \alpha, \beta, \gamma$, in clockwise order.
\end{theorem}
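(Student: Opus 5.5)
The plan is to use a continuity and degree argument in the spirit of Bukh's proof for equal parts. Since $\mu$ is absolutely continuous and finite, every line has measure zero, and all wedge measures vary continuously with the parameters of the lines.

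\textbf{Step 1: reduce to a one-parameter family.} For each direction $\theta\in[0,\pi)$, let $\ell_1(\theta)$ be the line with direction $\theta$ that bisects $\mu$. If its position is not unique, I take the midpoint of the interval of bisecting positions, which keeps the choice continuous. Next I add two further lines through a common point $c$ on $\ell_1(\theta)$, with directions $\theta+\phi_2$ and $\theta+\phi_3$, where $0<\phi_2<\phi_3<\pi$. Then I count the wedges clockwise from $\ell_1$ on one side.

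\textbf{Step 2: fix the first three measures.} On the side of $\ell_1$ that has measure $\mu(\Reals^2)/2=\alpha+\beta+\gamma$, I want to choose the centre $c$ on $\ell_1$ and the two angles so that the three consecutive wedges have measures $\alpha,\beta,\gamma$. For a given $c$, the angles are determined by monotonicity of the wedge measure in the angle. Where monotonicity is not strict, I again choose midpoints so the dependence stays continuous. The opposite wedges on the other side then have measures $\alpha',\beta',\gamma'$ with $\alpha'+\beta'+\gamma'=\alpha+\beta+\gamma$.

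\textbf{Step 3: find the right centre and direction.} Consider the functions $f(c,\theta)=\alpha'-\alpha$ and $g(c,\theta)=\beta'-\beta$. We need a common zero of $f$ and $g$ in the two-parameter family given by the position of $c$ on $\ell_1(\theta)$ and by $\theta$. Two observations drive this.
\begin{itemize}
\item Moving $c$ to the far ends of $\ell_1$ pushes $\alpha'$ to its extremes. As $c$ goes to infinity in one direction, the wedge opposite to $\alpha$ shrinks toward measure $0$; in the other direction it grows. So along each line, $f$ changes sign, and the intermediate value theorem gives a curve $c(\theta)$ with $f=0$.
\item Rotating $\theta$ by $\pi$ swaps the two sides of $\ell_1$. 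This reverses the roles in a way that makes $g$ change sign between $\theta$ and $\theta+\pi$.
\end{itemize}
A second application of the intermediate value theorem along $\theta\mapsto (c(\theta),\theta)$ then yields a common zero. At that zero, $\gamma'=\gamma$ automatically, which gives the six wedges $\alpha,\beta,\gamma,\alpha,\beta,\gamma$.

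\textbf{Main obstacle.} The delicate step is making the zero set of $f$ a continuous curve in $\theta$. This is needed to apply the intermediate value theorem to $g$, but the zero set may be an interval or may branch. I would handle this with a degree or parity argument instead: consider the map $(f,g)$ on the boundary of the rectangle $[\text{far left},\text{far right}]\times[0,\pi]$, with the identification at $\theta=\pi$. I would show that this map has odd winding number around the origin, using the antipodal symmetry from the $\pi$-rotation, much as in Borsuk--Ulam. A second option is to perturb $\mu$ to have a strictly positive smooth density, so that all the choices are unique and $f$ is strictly monotone in $c$. The general case then follows by a limiting argument, using compactness of the configurations after restricting to a large disk that contains almost all the mass.
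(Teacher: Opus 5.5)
The paper does not prove this statement. It cites Ceder and remarks that it also follows by generalising Bukh's argument for $\alpha=\beta=\gamma$. Your plan (a halving line, a centre sliding along it to balance the wedge opposite the $\alpha$-wedge, then a half-turn of the line) is exactly that generalisation, and it can be completed along your ``second option''. Two points need repair.

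First, the midpoint selections in Steps~1--2 are \emph{not} continuous in general. Suppose the two halves of $\mu$ lie in $\{x\le -1\}$ and $\{x\ge 1\}$, and both have thin tails running upward to infinity. Then for directions tilted slightly off vertical, the unique halving line tends to $x=-1$ or to $x=1$ depending on the side of the tilt, not to the midpoint $x=0$. Likewise, the interval of admissible angles at $c$ can collapse abruptly when $c$ crosses the boundary of the support. So perturbing to a strictly positive density (say $\mu+\frac1n\nu$ with $\nu$ Gaussian, targets rescaled) is not optional; it is the route to take, and the degree alternative can be dropped. In the limit you must also show that the centres stay bounded. This holds when at least two of $\alpha,\beta,\gamma$ are positive. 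By tightness, almost all mass lies in a fixed disk $D$, and each wedge of a positive pair must meet $D$. From a far-away centre, a wedge of angle less than $\pi$ meets both $D$ and its reflection through the centre only if its angle is close to $\pi$. At most one of three wedges spanning a half-plane can have such an angle.

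Second, and more importantly, the sign change of $g$ is the crux, and ``reverses the roles'' does not prove it. For unrelated zeros of $f$ at $\theta$ and $\theta+\pi$, nothing forces $g$ to have opposite signs. What makes it work, under positive density, is that the zero of $f$ is unique and is the \emph{same} point for $\theta$ and $\theta+\pi$.

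Uniqueness: suppose lines through two centres $c\neq\tilde c$ on $\ell_1$ both cut measure $\alpha$ off the chosen half-plane. They must cross inside that half-plane, since otherwise one $\alpha$-wedge strictly contains the other. Hence they do not cross in the opposite half-plane, so the opposite wedge strictly grows as the centre moves. Thus $f$ is strictly monotone in $c$, and $c(\theta)$ is unique and continuous.

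Coincidence: at $c(\theta)$, the wedge across $\ell_1$ opposite the $\alpha$-wedge has measure exactly $\alpha$. After reversing the orientation of $\ell_1$, this is precisely the first wedge of the new construction. So the same second line is chosen, $f$ vanishes there, and $c(\theta+\pi)=c(\theta)$.

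The two configurations now differ only in the third line. If $\beta'>\beta$, the new third line must shrink the middle wedge on the former opposite side to $\beta$. This strictly shrinks its vertical angle, which had measure $\beta$, so $g(\theta+\pi)<0<g(\theta)$; the case $\beta'<\beta$ is symmetric. With this in place, the intermediate value theorem on the circle of oriented directions finishes the argument. If $\alpha=0$ or $\beta+\gamma=0$, relabel $\alpha,\beta,\gamma$ cyclically so that $f$ genuinely changes sign. If only one parameter is positive, the statement is degenerate.
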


To get such partitions for point sets, we need to introduce \emph{weights}.
In particular, assume we have a point set $V \subset \Reals^2$ with weights given
by $\omega : V \to \Reals_{\geq 0}$.
For convenience, let $\omega(S) := \sum_{v \in S} \omega(v)$ for any $S \subseteq V$ 
and assume $\omega(V) = 1$.
Consider the six closed wedges $W_1, \dots, W_6$ defined by a set of three concurrent lines.
For each wedge $W_j$, we define a weight function $\omega_j: V\rightarrow \Reals_{\geq 0}$ 
where for any point $v \in V$
we have $\omega(v) = \sum_{j=1}^6 \omega_j(v)$ and $\omega_j(v) > 0$ only when $v \in W_j$. 
Thus, points that lie in the interior of a wedge assign all of their weight to that wedge,
while the weight of a point that lies on the boundary between two wedges (or at the
common intersection of all wedges) can be distributed as needed.
We define $\omega(W_j) := \sum_{v\in V} \omega_j(v)$ to be the
total weight assigned to the wedge~$W_j$.
Theorem~\ref{thm:sixsplitcont} generalises to this discrete weighted
setting using a standard limit argument
(see e.g.\ Theorem~3.1.2 of \cite{BorsukUlam} or Appendix~A of \cite{Aronov0RT024}).
For completeness, we give the proof below.

\begin{lemma} \label{lem:sixsplit}
    Let $V$ be a set of points in $\Reals^2$ with weights given by $\omega$.
    For any given $\alpha,\beta,\gamma \geq 0$ such that $2\alpha+2\beta+2\gamma = 1$,
    we can find a set of three concurrent lines that partitions the plane into six wedges
    $W_1,\ldots,W_6$, together with a weight function $\omega_j(v)$ for each point $v\in V$
    and wedge $W_j$,
    such that the weights of the wedges are
    $\alpha, \beta, \gamma, \alpha, \beta, \gamma$, in clockwise order.
\end{lemma}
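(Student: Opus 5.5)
The plan is to deduce the lemma from Theorem~\ref{thm:sixsplitcont} by the usual smoothing-and-limit argument. For $\delta>0$, let $\mu_\delta$ be the measure obtained by replacing each point $v\in V$ by the uniform distribution of mass $\omega(v)$ on the disk $D(v,\delta)$. Then $\mu_\delta$ is finite and absolutely continuous, and $\mu_\delta(\Reals^2)=1=2\alpha+2\beta+2\gamma$. Theorem~\ref{thm:sixsplitcont} therefore gives, for each $\delta$, three concurrent lines with common point $c_\delta$ and directions $\theta^1_\delta,\theta^2_\delta,\theta^3_\delta$. Their wedges $W^\delta_1,\dots,W^\delta_6$ have $\mu_\delta$-measures $\alpha,\beta,\gamma,\alpha,\beta,\gamma$ in clockwise order. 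For each $v$ and $j$, set $\omega^\delta_j(v):=\mu_\delta(W^\delta_j\cap D(v,\delta))$. This is nonnegative, satisfies $\sum_j\omega^\delta_j(v)=\omega(v)$ (wedge boundaries have measure zero), and gives $\sum_v\omega^\delta_j(v)=\mu_\delta(W^\delta_j)$.

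Next I take a sequence $\delta_i\to 0$ and pass to a convergent subsequence. The directions live in a compact space, and the finitely many numbers $\omega^\delta_j(v)$ live in $[0,\omega(v)]$, so both can be made to converge. The delicate part, and the step I expect to be the main obstacle, is the centre $c_\delta$, which might escape to infinity. If a subsequence of the centres stays bounded, compactness handles it. Otherwise I would argue as follows. As $c_\delta$ moves far away, $V$ (together with all the $\delta$-disks) lies eventually in a region seen from $c_\delta$ within a tiny angle. So all the mass lies in at most two wedges adjacent along one line, or in a single wedge, and the remaining wedges are empty. If the corresponding parameters among $\alpha,\beta,\gamma$ are positive, this contradicts the prescribed measures, so the centre stays bounded. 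If some parameters are zero, the degenerate configuration can be replaced directly: take a common point far away, or sweep a single line and add two more lines through a point of it. I would treat this as a short separate case analysis. Alternatively, one can sidestep it by choosing the limit centre as any point of the limiting lines, which are parallel in that situation.

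In the limit we obtain three concurrent lines with centre $c$ and directions $\theta^j$, wedges $W_j$, and weights $\omega_j(v):=\lim_i\omega^{\delta_i}_j(v)$. Linearity of limits gives $\omega_j\ge 0$, $\sum_j\omega_j(v)=\omega(v)$, and $\sum_v\omega_j(v)\in\{\alpha,\beta,\gamma\}$ in the required clockwise pattern. It remains to check the support condition: $\omega_j(v)>0$ only if $v\in W_j$. Suppose $v\notin W_j$. Since $W_j$ is closed, $v$ has positive distance from $W_j$. Since the wedges $W^{\delta_i}_j$ converge to $W_j$ locally in the Hausdorff sense (the centres and directions converge), for large $i$ the disk $D(v,\delta_i)$ is disjoint from $W^{\delta_i}_j$. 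Hence $\omega^{\delta_i}_j(v)=0$, and so $\omega_j(v)=0$. This completes the plan.
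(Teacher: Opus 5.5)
Your route is the paper's: smooth each voter into a small disk, apply Theorem~\ref{thm:sixsplitcont}, extract a convergent subsequence, and check that the limit weights are supported on the limit closed wedges. You are more explicit than the paper about the support condition and about a possibly escaping centre; the paper simply assumes the centre lies in the convex hull of the disks. A minor point: your identities $\sum_j\omega^\delta_j(v)=\omega(v)$ and $\sum_v\omega^\delta_j(v)=\mu_\delta(W^\delta_j)$ require the disks $D(v,\delta)$ to be pairwise disjoint, so restrict to small $\delta$, as the paper does.

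The real flaw is in the step you yourself flag as the obstacle. A far-away centre does not force the mass into at most two adjacent wedges, because the directions are uncontrolled. If all three lines are nearly parallel and all pass through the cluster, the disks meet four wedges: the two thin ones and the two big ones flanking them. For $\beta=\gamma=0$ this can actually occur with the mass sitting in the two \emph{opposite} wedges $W_1,W_4$.

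The correct argument is Observation~\ref{obs:4wedges}. If $c_\delta\notin\mathrm{conv}(\bigcup_v D(v,\delta))$, this convex set meets at most four consecutive wedges, so two consecutive wedges have $\mu_\delta$-measure zero. Consecutive wedges carry different parameters, so two of $\alpha,\beta,\gamma$ must vanish. Hence, whenever at most one parameter is zero, the centres lie in the bounded set $\mathrm{conv}(\bigcup_v D(v,1))$ for $\delta\leq 1$, and compactness applies. This covers the application in Theorem~\ref{thm:l2lowk}, where all parameters are positive, and is presumably what underlies the paper's assumption.

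In the remaining case, say $\beta=\gamma=0$, your fallback of taking any point of the limiting lines fails when those parallel lines are distinct, since they have no common point. Instead, build the configuration directly:
\begin{itemize}
\item Take a line $\ell$ with weight at most $1/2$ strictly on each side.
\item Choose $c\in\ell$, splitting the weight of a voter at $c$ if needed, so that the weight strictly on one side of $\ell$ plus the weight on $\ell$ on one side of $c$ is exactly $1/2$.
\item Add two lines through $c$, on the appropriate side of $\ell$ and at angles to $\ell$ small enough that the four thin wedges contain no voter in their interiors. The voters on $\ell$ then lie on edges of the two big opposite wedges and can be assigned accordingly.
\end{itemize}
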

\begin{proof}
    For any $i \in \mathbb N$ and $v \in V$,
    let $D_i(v)$ be the disk of area $\omega(v) / 2^i$ around $v$.
    From this we can define a finite, absolutely continuous measure $\mu_i$
    by taking the Lebesgue measure restricted to $\bigcup_{v \in V} D_i(v)$,
    that is, $\mu_i(S) := \mathrm{area}\left(S \cap \bigcup_{v \in V} D_i(v) \right)$.
    Now consider the set $L_i$ of lines given by Theorem~\ref{thm:sixsplitcont} for $\mu_i$,
    as well as the corresponding wedges $W_1^i, \dots, W_6^i$.
    For each point $v \in V$, let $\omega_j^i(v) := 2^i \cdot \mu_i(D_i(v) \cap W_j^i)$.
    We can assume that the common intersection $p_i$ of the lines from $L_i$ 
    always lies in the convex hull of $\bigcup_{v \in V} D_0(v)$,
    which implies that $L_i$ comes from a closed and bounded set.
    (Indeed, the set can be described by the point~$p_i$ plus the three angles specifying 
    the orientations of the lines.)
    Furthermore, $\omega_j^i(v) \in [0,1]$ for any $i$, $j$ and $v$.
    Now, by the Bolzano-Weierstrass theorem, there is an infinite set $I \subseteq \mathbb N$
    where the sequences $(\omega_j^i(v))_{i \in I}$ and $(L_i)_{i \in I}$ all converge.
    We will show that their limits form a valid partition with the desired properties.

    Let $i \in \mathbb N$ be large enough to ensure that the disks $D_i(v)$ are disjoint.
    Then by construction, $L_i$ is collinear and $\sum_{j=1}^6 \omega_j^i(v) = \omega(v)$
    for any point $v \in V$.
    The weight of wedge $W_j^i$ is $\omega(W^i_j) = 2^i \cdot \mu_i(W_j^i)$,
    which will come from $\{\alpha,\beta,\gamma\}$ as prescribed.
    We can only have $\omega_j^i(v) > 0$ if $D_i(v)$ intersects $W_j^i$.
    These properties all remain true when we take the limit.
\end{proof}
We also need the following observation.
\begin{observation}\label{obs:4wedges}
    Let $L$ be a set of three lines intersecting in a common point~$p^*$, 
    and consider the six closed wedges defined by $L$.
    Any convex set $S$ not containing $p^*$ intersects at most four wedges,
    and the wedges intersected by $S$ are consecutive in the clockwise order.
\end{observation}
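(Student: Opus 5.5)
The plan is to separate the six closed wedges from $S$ by a line through $p^*$. Since $S$ is convex and $p^*\notin S$, the separation theorem gives a line $h$ with $p^*\in h$ and $S$ contained in one open half-plane $H$ bounded by $h$. More precisely, take a closed half-plane $H^-$ with $p^*$ on its boundary whose interior avoids $S$. One could instead use the nearest point of $\overline{S}$ to $p^*$ when $S$ is closed, but weak separation of the point $p^*$ from the convex set $S$ suffices. With weak separation, $S$ lies in the closed half-plane $H$ opposite to $H^-$.

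Next we count the wedges meeting $H$. Parametrize directions around $p^*$ by angle. The six wedges are the closed angular sectors cut out by the six rays of $L$ emanating from $p^*$, and each sector spans an angle strictly less than $\pi$. The interior of $H$ is an open half-plane whose boundary passes through $p^*$, so it is the open angular interval $(\theta,\theta+\pi)$ of directions. A closed sector meets this open interval exactly when it overlaps it in an open set of directions. The six rays have angles $\rho_1<\dots<\rho_6$, and these are symmetric: opposite rays differ by $\pi$, since the rays come from three lines. An open interval of length $\pi$ therefore contains exactly three of the six rays when $\theta$ is not a ray direction, or exactly two when it is. The sectors overlapping the interval are then the consecutive sectors between these rays, plus the two end sectors that are partially covered, giving at most $3+1=4$ consecutive sectors, or $2+1=3$ in the degenerate case.

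The boundary issue is the main obstacle. $S$ may touch $h$ itself, away from $p^*$, and there it lies in the closed wedges containing the rays along $h$. To handle this, I would first rotate the separating direction. Since $p^*\notin S$, we can choose $h$ so that $S\cap h$ lies entirely on one open ray of $h$ starting at $p^*$, because a convex set meeting $h$ and missing $p^*$ meets it in a segment on one side of $p^*$. Then $S$ is contained in the interior of $H$ together with a part of one boundary ray. Tilting $h$ slightly about $p^*$, away from that ray, puts $S$ into an open half-plane. This works when $S$ is compact. For general $S$ I would instead argue directly: the directions $\{(x-p^*)/|x-p^*| : x\in S\}$ form an arc of length at most $\pi$ that is connected, by convexity, and the arc's length equals $\pi$ only when it is a closed semicircle. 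That last case is impossible, since it would force $S$ to contain points on both open rays of some line through $p^*$, and then the segment joining them, which lies in $S$ by convexity, passes through $p^*$. So $S$'s directions form an arc $A$ of length less than $\pi$, possibly half-open.

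Finally, a connected arc of length less than $\pi$ can contain at most three of the six ray directions, because those are spaced so that any four consecutive ones span at least $\pi$. Hence $A$ meets at most four closed sectors, and they are consecutive in clockwise order because $A$ is connected. A wedge $W_j$ meets $S$ only if some direction in $A$ lies in the angular sector of $W_j$, which completes the argument.
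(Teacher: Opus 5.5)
The paper states this as an observation without proof, so your argument has to stand on its own. Your overall strategy is sound: map $S$ to its set $A$ of directions from $p^*$, and use the fact that $A$ is connected and cannot contain both open rays of a line through $p^*$. But one step is false as written. You claim the arc has length $\pi$ only when it is a closed semicircle, and conclude that $A$ has length less than $\pi$. This holds for compact $S$, since then $A$ is compact. The paragraph, however, is meant for general $S$, and there it fails. Take $p^*$ at the origin. For $S=\{(x,y): y>0\}$, $A$ is the open semicircle $(0,\pi)$. For $S=\{y>0\}\cup\{(x,0): x>0\}$, $A$ is the half-open arc $[0,\pi)$. Even the closed set $\{y\geq 1\}$ gives $(0,\pi)$. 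So your final counting step, stated for arcs of length strictly less than $\pi$, is applied to sets that need not satisfy that hypothesis. (Your opening claim of strict separation by a line through $p^*$ fails for the second example too, though you already retreat to weak separation.)

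The fix is to use the property you actually proved instead of a length bound. $A$ contains no pair of antipodal directions: if $u,-u\in A$, the segment between the corresponding points of $S$ passes through $p^*$. The ray directions satisfy $\rho_{i+3}=\rho_i+\pi$, so any four consecutive ray directions contain the antipodal pair $\rho_i,\rho_{i+3}$. Hence the connected set $A$ contains at most three consecutive ray directions. It therefore meets at most four closed sectors, and these are consecutive.

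Alternatively, reduce to the compact case, where your tilting argument is correct. Pick one point of $S$ in each wedge that $S$ meets. Their convex hull is a compact convex subset of $S$, so it avoids $p^*$, and it meets exactly the same wedges as $S$.
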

We now have all the tools to prove our new bounds on $\eps$-nets for convex ranges.
First, let us define what we mean by an $\eps$-net of a point set $V$ with weights $\omega$, for the family of all convex ranges:
this is a point set $N$ such that any convex range $S$
that does not contain a point from $N$ has $\omega(V \cap S) \leq \eps$.
\begin{restatable}{theorem}{Ltwolowk}\label{thm:l2lowk}
    Let $\eps_k$ be the smallest value such that any weighted point set in $\Reals^2$
    admits a $\eps_k$-net of size~$k$ for convex ranges.
    Then for any $r_1,r_2,s \in \mathbb N_0$,
    \[
        \eps_{1 + r_1 + 2r_2 + 3s} \leq \frac12 \left(
        \frac1{\eps_{r_1}} +
        \frac2{\eps_{r_2}} \right)^{-1} +
        \frac12 \epsilon_s.
    \]
\end{restatable}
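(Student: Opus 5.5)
We would prove the bound by an explicit construction. Given a weighted point set $V$ with $\omega(V)=1$, we apply Lemma~\ref{lem:sixsplit} with $\gamma=\beta$. This gives three concurrent lines $\ell_1,\ell_2,\ell_3$ through a point $p^*$ and six wedges of weights $a,b,b,a,b,b$ in clockwise order, with $a+2b=\tfrac12$. The values of $a$ and $b$ will be fixed later. The net $N$ consists of the following points:
\begin{itemize}
\item the point $p^*$ itself;
\item an $\eps_{r_1}$-net of size $r_1$ for the weighted set given by the weight functions $\omega_1+\omega_4$, that is, for the two opposite wedges of weight $a$;
\item an $\eps_{r_2}$-net of size $r_2$ for each of the two pairs of opposite wedges of weight $b$, namely for $\omega_2+\omega_5$ and for $\omega_3+\omega_6$;
\item for each line $\ell_i$, an $\eps_s$-net of size $s$ for the weight restricted to one closed half-plane bounded by $\ell_i$. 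Each such half-plane is a union of three consecutive wedges, so its weight is $a+2b=\tfrac12$.
\end{itemize}
In total this is $1+r_1+2r_2+3s$ points. Before applying the definition of $\eps_k$ to a restricted weighted set, we rescale that set to total weight~$1$. A convex range avoiding a net of the rescaled set then has restricted weight at most $\eps$ times the original total.

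Now let $S$ be a convex range avoiding $N$. Since $p^*\in N$, Observation~\ref{obs:4wedges} says that $S$ meets at most four consecutive wedges. Moreover $S$ lies in an open half-plane whose boundary passes through $p^*$. For a pair of opposite wedges $W_j,W_{j+3}$, the set $S$ avoids the net of that pair. Hence $\omega_j(S)+\omega_{j+3}(S)\le 2a\,\eps_{r_1}$ for the $a$-pair and $\le 2b\,\eps_{r_2}$ for each $b$-pair. We choose $a,b$ so that these three bounds are all equal to some value $\delta$: $a=\delta/(2\eps_{r_1})$ and $b=\delta/(2\eps_{r_2})$. The constraint $a+2b=\tfrac12$ then forces $\delta=(1/\eps_{r_1}+2/\eps_{r_2})^{-1}$, and the target is $\delta/2$ plus $\tfrac12\eps_s$.

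The core of the proof is a charging argument. Let $S$ meet the consecutive wedges $W_i,\dots,W_{i+3}$. Some line $\ell$ through $p^*$ has a closed half-plane $H$ that contains a block of three consecutive wedges, and we may choose $H$ to be the half-plane for which we placed a $s$-point net. Then $\omega(S\cap H)\le \tfrac12\eps_s$, because $S\cap H$ is convex and avoids that net. The part of $S$ outside $H$ lies in the remaining wedges. We want to bound it by $\delta/2$ using the opposite-pair nets.

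This charging is the step I expect to be the main obstacle. A pair net only bounds the sum over both wedges of the pair by $\delta$, not each wedge separately by $\delta/2$. So I would argue as follows. Because $S$ lies in a half-plane through $p^*$, it can contain the "deep" part of at most one wedge from each opposite pair. Comparing $S$ with its reflection-type counterpart in the opposite wedge, or splitting $S$ by the line through $p^*$, should yield the factor $\tfrac12$.

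If this direct charging fails, I would try the second option. There we pick, for each line, which side gets the $s$-net so that the three chosen half-planes cover every possible block of four consecutive wedges appropriately. We would then do a case analysis on the starting index $i$ of the block of wedges met by $S$, using the symmetry of the weight pattern $a,b,b,a,b,b$ to reduce to two cases.
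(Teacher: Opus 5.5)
\textbf{There is a genuine gap, at exactly the step you flagged, and your construction cannot close it.}

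Your partition, wedge weights, point count and rescaling are all fine. The problem is the leftover wedge. When $S$ meets four consecutive wedges $W_i,\dots,W_{i+3}$ and $H$ contains three of them, the leftover wedge is always $W_i$ or $W_{i+3}$, and these two are opposite. A net built for the union $W_i\cup W_{i+3}$ only gives $\omega_i(S)+\omega_{i+3}(S)\le\delta$. It says nothing about how this guarantee splits between the two wedges. For example, if $W_i$ and $W_{i+3}$ each carry one tight cluster of equal weight, any point on the segment between the clusters is a valid centerpoint of the union. Such a point can lie next to the $W_{i+3}$ cluster, leaving the whole of $W_i$ unprotected. No half-plane or reflection argument can repair this, because the net for the union is arbitrary.

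So your argument only gives $\omega_i(S)\le\min\bigl(\omega(W_i),\delta\bigr)$. With $\omega(W_i)=\delta/(2\eps_r)$, this exceeds $\delta/2$ as soon as $\eps_r<1$, i.e.\ whenever $r_1\ge1$ or $r_2\ge1$. Concretely, take $r_1=1$, $r_2=0$, $s=1$, the case behind $\eps_5\le 10/21$:
\begin{itemize}
\item You get $a=3/14$ and $b=1/7$.
\item A centerpoint of $W_1\cup W_4$ only bounds convex ranges by $\frac23\cdot\frac37=\frac27$.
\item For the block $W_1,\dots,W_4$ the leftover is $W_1$ or $W_4$, bounded by $\frac3{14}$ rather than $\frac17$.
\item Your argument therefore yields only $\frac13+\frac3{14}=\frac{23}{42}$.
\end{itemize}

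\textbf{The missing idea is to spend the same $r_1+2r_2$ points on single wedges instead of opposite pairs.} Put a net on one wedge from each opposite pair, alternating around $p^*$. Your weights are exactly the paper's $\alpha=\lambda/\eps_{r_1}$ and $\beta=\gamma=\lambda/\eps_{r_2}$ with $\lambda=\delta/2$. Take an $\eps_{r_1}$-net for $\omega_1$ alone, and $\eps_{r_2}$-nets for $\omega_3$ alone and for $\omega_5$ alone. Each of these wedges then contributes at most $\eps_{r_1}a=\eps_{r_2}b=\delta/2$ to any convex range avoiding its net.

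Then place the three $\eps_s$-nets on the half-planes $W_2\cup W_3\cup W_4$, $W_4\cup W_5\cup W_6$ and $W_6\cup W_1\cup W_2$, one per line. For each of the six blocks $W_i,\dots,W_{i+3}$, one of these half-planes contains three of the four wedges, and the remaining wedge is always $W_1$, $W_3$ or $W_5$. This gives $\frac12\eps_s+\frac{\delta}{2}$, as claimed.

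The choice of sides is not free even in your pair-net version. For instance, choosing $W_1\cup W_2\cup W_3$, $W_2\cup W_3\cup W_4$ and $W_3\cup W_4\cup W_5$ leaves the block $W_4,W_5,W_6,W_1$ with no chosen half-plane containing three of its wedges.
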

\begin{figure}[t]
\centering
\includegraphics{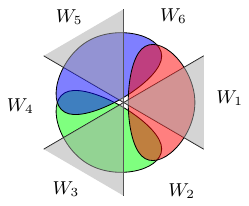}
\caption{Illustration for the proof of Theorem~\ref{thm:l2lowk}.}
\label{fig:1and2}
\end{figure}
\begin{proof}
    Let $V$ be our point set with weights $\omega$.
    We apply Lemma~\ref{lem:sixsplit} with $\alpha = \frac{\lambda}{\eps_{r_1}}$
    and $\beta = \gamma = \frac{\lambda}{\eps_{r_2}}$, where the scaling factor
    $\lambda := \frac{1}{2}\left( \frac1{\eps_{r_1}} + \frac2{\eps_{r_2}} \right)^{-1}$
    ensures $2\alpha + 2\beta + 2\gamma = 1$.
    
    Lemma~\ref{lem:sixsplit} gives us a set $L$ of three concurrent lines. We now show that there exists
    a $(\lambda + \frac12\eps_s)$-net $N$ for~$V$. 
    To this end,
    label the wedges defined by $L$ as $W_1,\dots,W_6$ in clockwise order,
    as shown in Fig.~\ref{fig:1and2}(ii).
    We can assume without loss of generality that $W_1$ and $W_4$ have weight $\alpha$, 
    that $W_2$ and $W_5$ have weight $\beta$, and that $W_3$ and $W_6$ have weight $\gamma$.
    We add the following points to our net~$N$.
    \begin{itemize}
    \item We add the common intersection of the lines in~$L$, denoted by $p^*$, to~$N$.
    \item We take an $\eps_{r_1}$-net for $V$ with weights $\omega_1$,
          an $\eps_{r_2}$-net for $V$ with weights $\omega_3$,
          and an $\eps_{r_2}$-net for $V$ with weights $\omega_5$,
          and we add the points from these nets to~$N$.
     \item For each of the three collections of three consecutive wedges---these
     are indicated in red, green, and blue in Fig.~\ref{fig:1and2}(ii)---we take an $\eps_{s}$-net
     for $V$ with the corresponding sum of weights,
     and we add the points from these nets to~$N$.
    \end{itemize}
    By construction, the size of our net~$N$ is $1+r_1+2r_2+3s$. To prove the theorem, it thus
    suffices to prove that $N$ is a $(\lambda + \frac12\eps_s)$-net.
    
    Let $S$ be any convex range, and assume $p^*\not\in S$.
    From Observation~\ref{obs:4wedges} we know that in the worst case, $S$ intersects four consecutive wedges.
    In particular $S$ cannot intersect even the boundary of the remaining two wedges,
    so it does not matter how the weight of points on those boundaries was distributed.
    Note that $N$ contains an $\eps_s$-net for $V$ with weights $\omega_s$ given by three of these wedges,
    which means that $\omega_s(V \cap S) \leq \eps_s/2$.
    For the remaining wedge, net $N$ contains an $\eps_{r_1}$-net or an $\eps_{r_2}$-net.
    Denote its weights by $\omega_r$, then
    \[
    \begin{array}{lll}
    \omega_r(V \cap S)
    & \leq & \max \left( \eps_{r_1} \alpha,\; \eps_{r_2} \beta \right) \\
    & = & \max \left(
            \eps_{r_1} \frac{\lambda}{\eps_{r_1}}, 
            \eps_{r_2} \frac{\lambda}{\eps_{r_2}} \right) \\
    & = & \lambda.
    \end{array}
    \]
    Thus, in total this is at most $\frac12\eps_s + \lambda$.
    This finishes the proof of the theorem.
\end{proof}

Note that $\eps_0=1$, since if the net is empty, a range can contain all points.
Moreover, the $\eps$-net constructions of Mustafa and Ray~\cite{Mustafa2009}
for unweighted point sets directly generalise to weighted point sets.
Namely, whenever the construction requires finding a $d$-dimensional set
containing a certain number of points we instead consider the weight,
while for everything else we use the unweighted point set.
The proof now goes through verbatim, with all resulting bounds based on weight.
This yields $\eps_1 \leq 2/3$, and $\eps_2 \leq 4/7$, and $\eps_3 \leq 8/15$. 
Using Theorem~\ref{thm:l2lowk} we can now obtain $\eps$-nets with $k\geq 4$ points,
by finding the best choice of $r_1,r_2,s$ such that $k=r_1+2r_2+3s+1$.
This gives a $(1/2)$-net using only four points, by setting $r_1,r_2=0$ and $s=1$.
Hence, player~\Pone can always place four points to win at least as many voters as player~\Ptwo, 
as opposed to the five that were proven in earlier work.
Similarly, setting $r_1=1$, $r_2=0$ and $s=1$ yields a $(10/21)$-net of five points.
\begin{corollary}\label{cor:epshalf}
   $\eps_4 \leq \frac12$ and $\eps_5 \leq \frac{10}{21}$.
\end{corollary}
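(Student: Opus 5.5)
The corollary follows by plugging small parameter values into the recurrence of Theorem~\ref{thm:l2lowk}. Beyond the theorem we only need the base values $\eps_0=1$ (an empty net cannot stop a range containing all the weight) and $\eps_1\leq 2/3$. The latter comes from the weighted generalisation of the Mustafa--Ray construction discussed above; alternatively, a centerpoint of the weight distribution is a $(2/3)$-net, since every halfplane avoiding it has weight at most $2/3$ and every convex range avoiding it lies in such a halfplane. One caveat: the recurrence uses the exact optimal values $\eps_r$. If we only know upper bounds, we need the right-hand side to be monotone non-decreasing in $\eps_{r_1},\eps_{r_2},\eps_s$. It is: $(1/\eps_{r_1}+2/\eps_{r_2})^{-1}$ increases as either argument increases, and $\eps_s$ enters linearly with a positive coefficient. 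So substituting upper bounds yields valid upper bounds.

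For $\eps_4$ we take $r_1=r_2=0$ and $s=1$, so that $1+r_1+2r_2+3s=4$. The recurrence gives
\[
\eps_4 \leq \frac12\left(\frac11+\frac21\right)^{-1}+\frac12\eps_1 \leq \frac16+\frac13=\frac12 .
\]

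For $\eps_5$ we take $r_1=1$, $r_2=0$ and $s=1$, so that $1+1+0+3=5$. Using $\eps_1\leq 2/3$ in the first term gives $1/\eps_1\geq 3/2$, so $(1/\eps_1+2)^{-1}\leq (3/2+2)^{-1}=2/7$. Hence
\[
\eps_5 \leq \frac12\cdot\frac27+\frac12\cdot\frac23=\frac17+\frac13=\frac{10}{21}.
\]

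The only point needing care is the monotonicity justification for substituting the bound $\eps_1\leq 2/3$ into the reciprocal term. Everything else is direct arithmetic.
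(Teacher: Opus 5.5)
Your proposal is correct and matches the paper's proof: it uses the same substitutions $(r_1,r_2,s)=(0,0,1)$ and $(1,0,1)$ into Theorem~\ref{thm:l2lowk}, with $\eps_0=1$ and $\eps_1\leq 2/3$. Your monotonicity remark, which justifies plugging in upper bounds instead of exact values, makes explicit a step the paper leaves implicit.
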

\section{A quadtree-based strategy for  player~\texorpdfstring{\Pone}{P}}\label{sec:highk}

\subparagraph{The algorithm.}
First, we construct a \emph{compressed quadtree} $\T$ on the voter set~$V$.
This gives a tree structure where each node $\nu$ is associated with a square or a donut. 
We will refer to the square or donut associated to a quadtree node~$\nu$ as the \emph{cell}
of that node, and denote it by $\sigma(\nu)$. We assume that no voter in $V$ 
lies on the boundary of a cell $\sigma(\nu)$, which can be ensured by picking the square 
corresponding to $\mbox{root}(\T)$ suitably.
Donut cells in a compressed quadtree do not contain voters, and their corresponding nodes
are leaves in the compressed quadtree.\footnote{This ensures that the leaf cells partition the root cell. Note however that we do not use the donut cells, so the definition of Har-Peled~\cite{GeomAppAlg} which leaves out donut cells but is otherwise identical also suffices.}  
We denote the set of children of a node $\nu$ by $C(\nu)$.
For a square quadtree cell~$\sigma$, we denote its four quadrants 
by $\NE(\sigma)$, $\SE(\sigma)$, $\SW(\sigma)$, and $\NW(\sigma)$.

We define the \emph{size} of a square~$\sigma$, denoted by $\size(\sigma)$, to be its edge length.
Let $\dist(\sigma_1,\sigma_2)$ denote the distance between the boundaries of
two squares $\sigma_1,\sigma_2$.
The distance between two quadtree cells satisfies the following property. 
Note that the property also holds when the cells are nested.
\begin{observation} \label{obs:distance}
Let $\sigma_1$ and $\sigma_2$ be square cells corresponding to two nodes in $\T$.
If $\dist(\sigma_1,\sigma_2)>0$ then 
$\dist(\sigma_1,\sigma_2) \geq \min \left( \size(\sigma_1),\size(\sigma_2)\right)$.
\end{observation}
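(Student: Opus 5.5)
We prove Observation~\ref{obs:distance} by using the structure of quadtree cells: every square cell of $\T$ is a canonical square of the form $[i\cdot 2^{-j}, (i+1)\cdot 2^{-j}] \times [i'\cdot 2^{-j}, (i'+1)\cdot 2^{-j}]$, after scaling the root square to $[0,1]^2$. This holds also for a compressed quadtree, since compression only skips levels and never creates non-canonical squares. Assume without loss of generality that $\size(\sigma_1) \leq \size(\sigma_2)$, say $\size(\sigma_1)=2^{-j}$ and $\size(\sigma_2)=2^{-j'}$ with $j' \leq j$. The key point is that, at level $j$, both squares are aligned to the grid of spacing $2^{-j}$. The boundary of $\sigma_1$ consists of grid lines of that grid. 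The boundary of $\sigma_2$ also consists of such grid lines, because $2^{-j'}$ is a multiple of $2^{-j}$.

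We split into two cases. In the first case the squares are disjoint or touch; the touching case has distance $0$ and is excluded. Then they are separated horizontally or vertically. The distance is then at least the gap between the relevant boundary coordinates, which are distinct multiples of $2^{-j}$ in at least one coordinate. Hence the gap is at least $2^{-j}=\size(\sigma_1)$. In the second case the squares are nested. Two canonical squares either have disjoint interiors or one contains the other, so we may take $\sigma_1 \subseteq \sigma_2$. The distance between the boundaries is then the minimum over the four sides of the gap between a side of $\sigma_1$ and the parallel side of $\sigma_2$. Each such gap is a difference of two multiples of $2^{-j}$. If all gaps are positive, which is the condition $\dist>0$, then each is at least $2^{-j}$. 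One also has to handle the mixed case where the squares overlap without nesting. This cannot happen for canonical squares, and that fact gives the case split.

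The main obstacle is mostly bookkeeping. We need to justify that all cells of the compressed quadtree are canonical and therefore grid-aligned at their own level; this follows from how compressed quadtree nodes are defined, namely as descendants of the root square under repeated quadrant subdivision. We also need to check that the distance between boundaries reduces to coordinate gaps in both the separated and the nested cases. For separated squares the Euclidean distance is at least the larger axis gap. For nested squares the distance from the inner boundary to the outer boundary is exactly the minimum side gap. Once this is in place, both cases reduce to the statement that distinct multiples of $2^{-j}$ differ by at least $2^{-j}$.
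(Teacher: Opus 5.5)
Your proof is correct. The paper states this as an observation and gives no proof. Your argument supplies exactly the reasoning it relies on. All square cells of the compressed quadtree are canonical dyadic squares, and any two such squares either have disjoint interiors or are nested. In both the separated case and the nested case, $\dist(\sigma_1,\sigma_2)$ reduces to a positive difference of multiples of the finer grid spacing $2^{-j}$, so it is at least $2^{-j}$.

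Your argument never uses that the squares are nodes of $\T$, only that they are canonical, so it establishes the inequality for any two canonical squares.
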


The idea of our algorithm to generate the $k$ points played by player~\Pone is as follows.
We pick a parameter~$m$, which depends on~$k$, and then we recursively traverse the tree~$\T$ 
to generate a set $\R$ of regions, each containing between $m+1$ and $4m$ points.
Each region $R(\nu)\in\R$ will be a quadtree cell~$\sigma(\nu)$ minus the quadtree
cells $\sigma(\mu)$ of certain nodes~$\mu$ in the subtree rooted at~$\nu$.
For each region $R\in \R$, we then generate a set of points that we put into~$P$.
The exact procedure to generate the set $\R$ of regions is described by Algorithm~\ref{alg:PlacePoints},
which is called with~$\nu = \myroot(\T)$.
\begin{algorithm}[h]
\textbf{Input:} A node $\nu$ in $\T$ and a parameter~$m$ \\
\textbf{Output:} A pair $(\R,V_\mathrm{free})$, where $\R$ is a set of regions containing at least $m+1$ and \\ 
                \hspace*{12mm} at most $4m$ voters, and $V_\mathrm{free}$ contains the voters in the subtree rooted at~$\nu$
                \\ \hspace*{12mm}  that are not yet covered by a region in $\R$.   \\[-3mm]
\begin{algorithmic}[1]
\If{$\nu$ is a leaf node}
    \State Return $(\emptyset, \{v\})$ if $\nu$ contains a voter~$v$, and return $(\emptyset, \emptyset)$ otherwise
\Else 
    \State \begin{minipage}[t]{110mm}
            Recursively call \textsc{MakeRegions}$(\mu,m)$ for all children $\mu\in C(\nu)$.
            Let $\R$ be the union of the returned sets of regions, and 
            let $V_\mathrm{free}$ be the union of the sets of returned free voters. \\[-2mm]
            \end{minipage}
    \If{$|V_\mathrm{free}| \leq m$}
        \State Return $(\R, V_\mathrm{free})$
    \Else
        \State $R(\nu) \gets \sigma(\nu) \setminus \bigcup_{R \in \R} R$  \hfill $\rhd$ {\rm Note that} $V_\mathrm{free}=R(\nu)\cap V$.
        \State Return $(\R \cup \{R(\nu)\}, \emptyset)$
    \EndIf
\EndIf
\end{algorithmic}
\caption{\emph{MakeRegions}$(\nu,m)$}\label{alg:TSP-framework}\label{alg:PlacePoints}
\end{algorithm}

We use the regions in~$\R$ to place the points for player~\Pone, as follows.
For a region $R := R(\nu)$ in $\R$, define $\sigma(R) := \sigma(\nu)$ to be the cell
of the node~$\nu$ for which~$R$ was generated. For each $R\in\R$, player~\Pone will 
place a grid of $3\times 3$ points inside~$\sigma(R)$, plus four points
outside~$\sigma(R)$, as shown in Fig.~\ref{fig:compressed-quadtree}(i).  (Some points placed for $R$
may coincide with points placed for some $R'\neq R$, but this will only
help to reduce the number of points placed by~\Pone.)

Note that each $R\in \R$ contains more than $m$ voters and the regions in $\R$ are disjoint. 
Hence, $|\R|<n/m$ and $|P|< 13n/m$. 
A compressed quadtree can be constructed in $\bigO{n\log n}$ time, and the rest of the
construction takes $\bigO{n}$ time. The following lemma summarizes the  construction.
\begin{lemma}\label{lem:alg}
The quadtree-based strategy described above places fewer than $13n/m$ points 
for player~\Pone and runs in $\bigO{n \log n}$ time.
\end{lemma}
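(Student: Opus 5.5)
The plan has two parts: a counting bound on the number of regions produced by \textsc{MakeRegions}, and a running-time analysis of the compressed quadtree, the recursion, and the point placement.

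\emph{Counting.} We first show by induction on the height of $\nu$ that the call \textsc{MakeRegions}$(\nu,m)$ returns a collection $\R$ of pairwise disjoint regions, each containing more than $m$ voters, together with a set $V_\mathrm{free}$ of at most $m$ voters, such that the regions in $\R$ and $V_\mathrm{free}$ partition the voters in the subtree of~$\nu$. For a leaf this is immediate. For an internal node, the recursive calls on the children yield disjoint regions, since the children's cells are disjoint. The only new region that can be created is $R(\nu)=\sigma(\nu)\setminus\bigcup_{R\in\R}R$. It is disjoint from the existing regions by definition, and it is created only when $|V_\mathrm{free}|>m$. Since every voter of $\sigma(\nu)$ is either in some returned region or free, we get $R(\nu)\cap V=V_\mathrm{free}$, so the new region contains more than $m$ voters.

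\emph{From regions to points.} The regions are disjoint subsets of the root cell and each contains at least $m+1$ voters, so $|\R|(m+1)\leq n$, which gives $|\R|\leq n/(m+1)<n/m$. Player~\Pone places at most $9+4=13$ points per region; coinciding points only reduce this count. Hence $|P|\leq 13|\R|<13n/m$. The upper bound of $4m$ voters per region is not needed for this lemma, so we skip it.

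\emph{Running time.} A compressed quadtree on $n$ points has $\bigO{n}$ nodes and can be built in $\bigO{n\log n}$ time, which is a standard result (e.g.\ Har-Peled~\cite{GeomAppAlg}). The recursion visits each node once. The main obstacle is to avoid spending time proportional to the sizes of the sets at each node, since copying $V_\mathrm{free}$ or $\R$ explicitly could cost quadratic time in a deep tree. We therefore store $\R$ as a linked list and $V_\mathrm{free}$ as a linked list together with its size, and concatenate children's lists in $\bigO{1}$ time per child. When a region is created, we record it implicitly by the node $\nu$ together with its voter list; the list is then emptied, so each voter is charged once. This bounds the total work by $\bigO{n}$. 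Finally, placing the $13$ points for each region takes $\bigO{1}$ time given $\sigma(\nu)$, so in total $\bigO{|\R|}=\bigO{n}$. Adding the construction time of the tree gives $\bigO{n\log n}$ overall.
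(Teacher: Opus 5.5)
Your proof is correct and follows the same route as the paper: the regions are disjoint and each contains more than $m$ voters, so $|\R| < n/m$ and $|P| \le 13|\R| < 13n/m$; the running time is dominated by the $\bigO{n\log n}$ construction of the compressed quadtree, and the remaining recursion and point placement take $\bigO{n}$ time. You give more detail than the paper's one-line argument, namely the inductive invariant for \textsc{MakeRegions} and the linked-list implementation that keeps the traversal linear, but the underlying reasoning is identical.
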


\subparagraph{An analysis of the number of voters player~\Ptwo can win.}
To analyze the number of voters that \Ptwo can win, it will be convenient to look at
the ``child regions'' of the regions in $\R$, 
as defined next. Recall that for a region $R := R(\nu)$ in $\R$, we defined $\sigma(R) := \sigma(\nu)$.
Let $\NE(R) := R \cap \NE(\sigma(R))$ be the part of 
$R$ in the \NE-quadrant of~$\sigma(R)$. We call $\NE(R)$ a \emph{child region}
of $R$. The child regions $\SE(R)$, $\SW(R)$, and $\NW(R)$ are defined similarly;
see Fig~\ref{fig:compressed-quadtree}(ii) for an example.
\begin{figure}[t]
\centering
\includegraphics{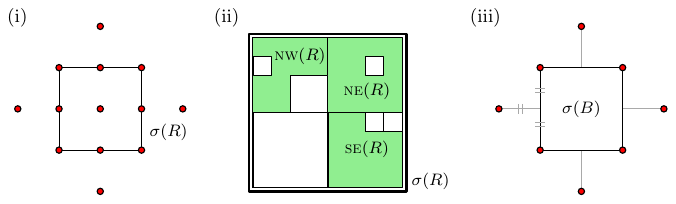}
\caption{(i) The 13 points (in red) placed in $P$ for a region $R\in\R$. 
          (ii) A region $R$ (shown in green) and its blocks (that is, its child regions). 
          The white area is covered by regions that were created before $R$.
          Since $\SW(\sigma(R))$ has already been fully covered, $\SW(R)$ does not exist.
          (iii) The eight points placed in $P$ for a type-II block $B\in\B$.
               }
\label{fig:compressed-quadtree}
\end{figure}

Let $\B$ be the set of non-empty child regions of the regions in~$\R$.
From now on, we will refer to the child regions in $\B$ as \emph{blocks}. Blocks are not 
necessarily rectangles, and they can contains holes and even be disconnected.
For a block $B\in \B$, we denote its parent region in $\R$ by $\pa(B)$,
and we let $\sigma(B)$ denote the quadtree cell corresponding to~$B$.
For instance, if $B = \NE(\pa(B))$ then $\sigma(B) = \NE(\sigma(\pa(B)))$.

Note that at the end of Algorithm~\ref{alg:PlacePoints}, the set $V_\mathrm{free}$
need not be empty. Thus the blocks in $\B$ may not cover all voters. Hence, we
add a special \emph{root block} $B_0$ to $\B$, with $\sigma(B_0) := \sigma(\myroot(\T))$ 
and which consists of the part of $\sigma(\myroot(\T))$ not covered by other blocks.
Note that we do not add any points to~$P$ for~$B_0$.

Because we will later refine our strategy for player~\Pone, it will
be convenient to analyze the number of voters that \Ptwo can win in an abstract setting.
Our analysis requires the collection $\B$ of blocks and the set $P$ of points played by \Pone
to have the following properties.
\begin{description}
\item[\ba] The blocks in $\B$ (which are still subsets of $\Reals^2$) are generated in a bottom-up manner using the compressed quadtree~$\T$.
       More precisely, there is a collection $N(\B)$ of nodes in $\T$ that is in one-to-one correspondence
       with the blocks in $\B$ such that the following holds: \\[-3mm]
       \begin{quotation}
         \noindent Let $B(\nu)$ be the block corresponding to a node $\nu\in N(\B)$. 
              Then $B(\nu) = \sigma(\nu) \setminus \bigcup_{\mu} B(\mu)$, where the union
              is taken over all nodes $\mu\in N(\B)$ that are a descendant of~$\nu$. \\[-3mm]
       \end{quotation}
         We also require that the blocks in $\B$ together cover all voters. 
\item[\bb] For each block $B\in \B$, except possibly the root block~$B_0$,
           the point set~$P$ includes the 13 points shown in Fig.~\ref{fig:compressed-quadtree}(i)
           for the cell that is the parent of $\sigma(B)$, or it includes the eight points shown
           in Fig.~\ref{fig:compressed-quadtree}(iii). In the former case we call $B$
           a \emph{type-I block}, in the latter case we call $B$ a \emph{type-II block}.
           Note that in both cases
           $P$ includes the four corners of $\sigma(B)$.
\end{description}
Observe that \ba implies that the blocks $B\in\B$ are disjoint. Moreover,
property~\bb implies the following. For a square~$\sigma$,
define $\plus(\sigma)$ to be the plus-shaped region consisting of
five equal-sized squares whose central square is~$\sigma$. 
\begin{observation}\label{obs:plus}
Let $q$ be a point played by player~\Ptwo and let $B\in\B$ be a block.
If $q$ wins a voter $v$ that lies in $\sigma(B)$, then $q\in\plus(\sigma(B))$.
Furthermore, if $q\in\sigma(B)$ then $q$ can only win voters in $\plus(\sigma(B))$.
\end{observation}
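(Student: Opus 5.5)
The plan is to use only the fact, guaranteed by \bb, that $P$ contains the four corners of $\sigma(B)$. Let $\sigma:=\sigma(B)$, with side length $s$, and let $p_1,\dots,p_4$ be its corners, all of which belong to $P$.

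\emph{First claim.} Let $v\in\sigma$. If $q$ wins $v$, then $|qv|\le |vp_i|$ for every corner $p_i$, where distances are Euclidean. Suppose, for contradiction, that $q\notin\plus(\sigma)$. Place coordinates so that $\sigma=[0,s]^2$. The complement of $\plus(\sigma)$ is the union of two kinds of regions:
\begin{itemize}
\item the four corner regions, e.g.\ $\{x>s,\ y>s\}$ (and its symmetric copies), plus the parts of these that extend beyond the arms;
\item the regions lying beyond an arm, e.g.\ $\{x>2s\}$ and its symmetric copies.
\end{itemize}

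\emph{Case 1: $q$ in a corner region.} Say $q$ lies in the closed quadrant $\{x\ge s,\ y\ge s\}$ but outside $\plus(\sigma)$. Then $q$ lies outside the arm squares $[s,2s]\times[0,s]$ and $[0,s]\times[s,2s]$. In this case I compare $q$ with the corner $p=(s,s)$. For $v\in\sigma$ we have $v_x\le s\le q_x$ and $v_y\le s\le q_y$. The point $p$ lies in the axis-parallel box spanned by $v$ and $q$, so $|vp|\le|vq|$. Equality forces $q=p$, but points of \Ptwo are disjoint from $P$, and ties go to \Pone anyway. So $q$ does not win $v$.

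\emph{Case 2: $q$ beyond an arm.} Say $q_x>2s$ and $0\le q_y\le s$ (the other sub-cases reduce to Case 1). Compare $q$ with whichever of the corners $(s,0)$ and $(s,s)$ lies on the same side of the line $y=v_y$ as $q$; call it $c$. The vertical offsets satisfy $|v_y-c_y|\le s$. More precisely:
\begin{itemize}
\item if $q_y\ge v_y$, take $c=(s,s)$;
\item otherwise take $c=(s,0)$.
\end{itemize}
Then $|c_y-v_y|\le s$ and $|q_x-v_x|>2s-v_x\ge s$.

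The hardest step is making this inequality airtight. The cleanest route is to show $|vc|^2<|vq|^2$ directly:
\begin{itemize}
\item horizontally, $(s-v_x)^2<(q_x-v_x)^2$, with a gain of at least $(q_x-s)(q_x+s-2v_x)>s\cdot s$;
\item vertically, $(c_y-v_y)^2-(q_y-v_y)^2\le s^2$.
\end{itemize}
So the net difference is negative. I expect to spend the most care on exactly this bookkeeping, choosing $c$ so that the vertical loss is at most $s^2$ while the horizontal gain exceeds $s^2$.

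\emph{Second claim.} This is a symmetric argument with the roles exchanged. Suppose $q\in\sigma$ and $v\notin\plus(\sigma)$. Then the same computations, with $q$ and $v$ swapped, give a corner $p$ of $\sigma$ with $|vp|\le|vq|$. Hence $v$ is won by \Pone, using the tie-breaking rule in favour of \Pone.
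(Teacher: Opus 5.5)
Your proof of the first claim is correct, and it needs only the four corners of $\sigma(B)$, as you intend. In Case~1 the corner $(s,s)$ lies in the axis-parallel box spanned by $v$ and $q$. In Case~2 the horizontal gain $(q_x-s)(q_x+s-2v_x)>s^2$ beats a vertical loss of at most $s^2$.

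The gap is the second claim, which does not follow by ``swapping $q$ and $v$''. Swapping the roles in your computation gives a corner $p$ with $|qp|\le|qv|$, but you need $|vp|\le|vq|$.

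Case~1 survives only because ``$p$ lies in the box spanned by $v$ and $q$'' is symmetric in $v$ and $q$. Case~2 does not survive. For $q\in\sigma$, $v_x>2s$ and $0\le v_y\le s$, the horizontal gain becomes $(s-q_x)(2v_x-s-q_x)$. This tends to $0$ as $q$ approaches the right edge of $\sigma$, while the vertical term $(c_y-v_y)^2-(q_y-v_y)^2$ can stay positive.

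In fact the second claim is false if all you know is that the four corners are in $P$. Take $\sigma=[0,s]^2$, $v=(3s,s/2)\notin\plus(\sigma)$ and $q=(s-\delta,s/2)$. Then $|vq|=2s+\delta$, while every corner is at distance at least $\tfrac{\sqrt{17}}{2}s\approx 2.06\,s$ from $v$. So for small $\delta>0$ the point $q$ wins $v$.

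For this part you must therefore use the full point configurations required by \bb (Fig.~\ref{fig:compressed-quadtree}(i) and~(iii)), not just the corners. Besides the corners of $\sigma(B)$, these configurations contain points of $P$ beyond each edge of $\sigma(B)$. For a type-I block these are the other grid points of the parent cell and the parent's outer points; for a type-II block they are its four outer points.

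In Case~2 you then compare $v$ with such a point instead of a corner. For instance, let $p=(3s/2,s/2)$ be the centre of the right arm square. Then $|vq|^2-|vp|^2\ge (v_x-s)^2-(v_x-\tfrac32 s)^2-(v_y-\tfrac12 s)^2\ge s v_x-\tfrac32 s^2>0$ whenever $v_x>2s$. A far corner $(2s,0)$ or $(2s,s)$ of the arm square works likewise, since $(v_x-s)^2-(v_x-2s)^2=2sv_x-3s^2>s^2$.

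The paper's one-line justification (``property \bb implies'') implicitly relies on these extra points for the second statement. The first statement follows from the corners alone, exactly as in your argument.
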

%
It is easy to see that the sets $\B$
and $P$ generated by the construction described above have properties \ba and \bb.
We proceed to analyze the number of blocks from which a point $q$ played
by~\Ptwo can win voters, assuming the set $\B$ of blocks has the properties stated above.
\medskip

We will need the following observation. It follows from \ba, which states that a block $B$
completely covers the part of $\sigma(B)$ not covered by blocks that
have been created earlier in the bottom-up process. 
\begin{observation}\label{obs:subset}
If $\sigma(B)\subset \sigma(B')$ for two blocks $B,B'\in \B$ then $B'\cap \sigma(B)=\emptyset$. 
\end{observation}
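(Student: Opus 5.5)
We need to prove Observation~\ref{obs:subset}: if $\sigma(B)\subset\sigma(B')$ for blocks $B,B'\in\B$, then $B'\cap\sigma(B)=\emptyset$. The plan is to work directly from property \ba, writing $B=B(\nu)$ and $B'=B(\nu')$ for the corresponding nodes $\nu,\nu'\in N(\B)$.

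The first step is to turn the containment of cells into an ancestor relation in $\T$. In a compressed quadtree, any two square cells are either nested or have disjoint interiors. Moreover, $\sigma(\mu)\subseteq\sigma(\mu')$ holds exactly when $\mu$ lies in the subtree of $\mu'$; one could worry that distinct nodes might share the same square, but compression removes such duplicate cells. Since the containment $\sigma(B)\subset\sigma(B')$ is strict, $\nu\neq\nu'$, and so $\nu$ is a proper descendant of $\nu'$.

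Next we apply \ba to $\nu'$, which gives $B'=\sigma(\nu')\setminus\bigcup_\mu B(\mu)$, where the union ranges over all descendants $\mu\in N(\B)$ of $\nu'$, and in particular includes $\nu$. It therefore suffices to show that $\sigma(\nu)$ is covered by $B(\nu)$ together with the blocks of descendants of $\nu$. This follows from \ba applied to $\nu$ itself: $B(\nu)=\sigma(\nu)\setminus\bigcup_{\mu}B(\mu)$ over descendants $\mu$ of $\nu$, so
\[
\sigma(\nu)\subseteq B(\nu)\cup\bigcup_{\mu \text{ desc.\ of } \nu}B(\mu).
\]
Every descendant of $\nu$ is also a descendant of $\nu'$, so all of these sets are removed when $B'$ is formed. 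Hence $B'\cap\sigma(\nu)=\emptyset$, which is the claim.

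The only step that needs care is the first one, the correspondence between strict containment of cells and the descendant relation. It follows from the hierarchical structure of the compressed quadtree: cells of non-ancestor-related nodes have disjoint interiors, so if $\nu$ were not a descendant of $\nu'$ then strict containment would be impossible. The coverage identity itself comes straight from the form of \ba.
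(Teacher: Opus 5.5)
Your proof is correct and follows the paper's own justification, which derives the observation from \ba by noting that a block covers everything in its cell not already covered by blocks created earlier in the bottom-up process; you additionally make explicit that strict containment $\sigma(B)\subset\sigma(B')$ forces the node of $B$ to be a proper descendant of the node of $B'$. That step needs only strictness together with the fact that quadtree cells are nested or interior-disjoint, so your remark about compression removing duplicate cells is unnecessary.
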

The following lemma states that the set $P$ of points played by
player~\Pone includes all vertices of the blocks in~$\B$, except possibly the
corners of the root block~$B_0$.
\begin{lemma}\label{lem:vertices-in-P}
Let $p$ be a vertex of a block $B\in \B$. Then $p\in P$, except possibly
when $p$ is a corner of $\sigma(B_0)$.
\end{lemma}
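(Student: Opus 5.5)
By \ba, every block $B$ is $\sigma(B)$ minus the union of the blocks $B(\mu)$ for descendants $\mu$ of the node of $B$. Each such block lies inside a quadtree cell $\sigma(\mu)\subseteq\sigma(B)$. So the boundary of $B$ consists of pieces of $\partial\sigma(B)$ and pieces of the boundaries $\partial\sigma(B')$ of descendant blocks $B'$. Consequently every vertex $p$ of $B$ is one of three kinds:
\begin{enumerate}
\item a corner of $\sigma(B)$;
\item a corner of some $\sigma(B')$ with $\sigma(B')\subset\sigma(B)$;
\item an intersection point of edges of two different squares, where a non-corner point of one square's edge meets the other square.
\end{enumerate}
The plan is to show that every such point is a corner of $\sigma(B'')$ for some non-root block $B''$. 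Property \bb guarantees that $P$ contains the four corners of $\sigma(B'')$ for every block other than $B_0$, so this gives $p\in P$.

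\emph{Types 1 and 2.} For type 1, either $B\neq B_0$ and \bb applies directly, or $B=B_0$, which is exactly the excluded case. For type 2, the descendant block $B'$ is never the root block, because $B_0$ corresponds to the root and is a descendant of nothing. So \bb gives that the corners of $\sigma(B')$ lie in $P$.

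\emph{Type 3.} Here I use the quadtree structure. Any two quadtree cells are either nested or interior-disjoint; this holds for compressed quadtrees too, since cells of nodes are squares of a common hierarchical grid. Suppose edges of two such squares $\sigma_1,\sigma_2$ meet at $p$, and $p$ is a corner of neither. Then the two edges are collinear and overlap, since squares from a common dyadic grid have axis-parallel edges and cannot cross properly. Along an axis-parallel boundary, a point where the boundary of $B$ turns must therefore be an endpoint of one of the overlapping edges, that is, a corner of one of the squares.

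The main obstacle is making this rigorous. Two things need care: showing that every vertex of the possibly disconnected, holed region $B$ is a corner of some square whose block is a descendant of $B$ (or of $\sigma(B)$ itself), and using Observation~\ref{obs:subset} to exclude contributions from blocks that are not descendants. I would first formalize that $\partial B\subseteq\partial\sigma(B)\cup\bigcup_{\mu}\partial\sigma(\mu)$, with the union over descendant block nodes $\mu$. Then I would argue that a vertex, meaning a point where the boundary direction changes, of a finite union/difference of grid-aligned squares from one quadtree hierarchy lies at a corner of one of these squares. In such a hierarchy, boundaries can only overlap along entire edge segments or meet at corners; a T-junction occurs only where a smaller square's corner touches a larger square's edge.
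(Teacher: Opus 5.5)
Your proposal is correct, but it takes a different route from the paper.

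The paper argues by induction along the bottom-up order from \ba. It looks at the four small quadrants around~$p$ and distinguishes whether $p$ is a reflex or a convex vertex of~$B$. In each case it finds one of two things: either $p$ is a corner of $\sigma(B)$, so that \bb applies, or some quadrant around $p$ lies in $\sigma(B)\setminus B$. In the latter case that quadrant lies in a block $B'$ created earlier, of which $p$ is again a vertex, and the induction hypothesis gives $p\in P$.

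You avoid induction altogether. You prove the stronger structural fact that every vertex of~$B$ is a corner of $\sigma(B)$ or of $\sigma(B')$ for a descendant block~$B'$, after which \bb finishes the job in one step. The key ingredient is that quadtree cells form a laminar family (nested or interior-disjoint), which the paper only uses implicitly through Observation~\ref{obs:subset}. Suppose $p$ is a corner of none of the relevant squares. Then each square whose boundary passes through $p$ is locally a half-plane. A horizontal and a vertical half-plane can be neither nested nor interior-disjoint, so all these boundaries lie on one line and $\partial B$ cannot turn at~$p$.

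This gives a cleaner argument. It also sidesteps a small point the paper glosses over: when the quadrants around $p$ outside $B$ are split among several earlier blocks, one must pick the block of which $p$ is actually a vertex.

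To tighten your write-up, first observe that \ba gives $\bigcup_\mu B(\mu)=\bigcup_\mu\sigma(\mu)$, with both unions over the descendant block nodes~$\mu$. So $B$ is simply $\sigma(B)$ minus a union of quadtree cells, and your local half-plane analysis applies directly.
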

\begin{proof}
Property \ba states that the blocks in $\B$ are created in a bottom-up order.
We will prove the lemma by induction on this (partial) order.

Consider a block $B\in \B$ and let $p$ be a vertex of~$B$.
Let $s$ be a sufficiently small square centered
at $p$ and let $s_1,s_2,s_3,s_4$ be its quadrants. 
There are two cases; see Fig.~\ref{fig:vertices-in-P}.
\begin{figure}
\centering
\includegraphics{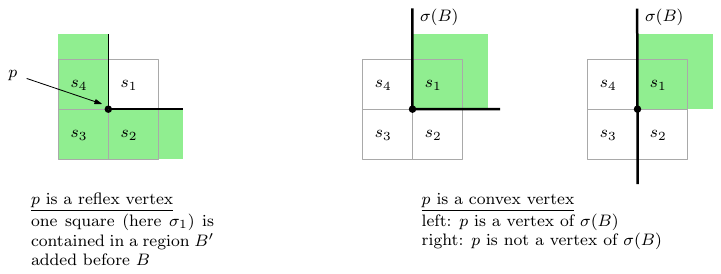}
\caption{Illustration for the proof of Lemma~\ref{lem:vertices-in-P}.}
\label{fig:vertices-in-P}
\end{figure}

If $p$ is a reflex vertex of $B$, then $B$ covers three of the four squares~$s_1,s_2,s_3,s_4$.
The remaining square must already have been covered by a region $B'$ 
created before~$B$, by Observation~\ref{obs:subset}. 
By induction, we may conclude that $p\in P$.

If $p$ is a convex vertex, then exactly one of the four squares~$s_1,s_2,s_3,s_4$,
say $s_1$, is contained in~$B$. If $p$ is a corner of $\sigma(B)$, then
$p\in P$ by property~\bb. Otherwise, at least
one square $\sigma_i\neq \sigma_1$, say $\sigma_2$, is contained in $\sigma(B)$.
We can now use the same argument as before: $p$ is a vertex of 
a region $B'$ created before~$B$,
and so $p\in P$ by induction. Note that this not only holds
when $p$ lies on an edge of $\sigma(B)$, as in Fig.~\ref{fig:vertices-in-P},
but also when $p$ lies in the interior of~$\sigma(B)$.
\end{proof}
Now consider a point $q$ played by player~\Ptwo, and assume without loss of generality
that $q\in \sigma(B_0)$. We first show that $q$ can win
voters from at most five blocks $B\in\B$; later we will improve
this to at most three blocks.
We may assume that the horizontal
and vertical lines through $q$ do not pass through a vertex of any block~$B\in\B$. 
This is without loss of generality, because an infinitesimal perturbation of $q$ ensures this
property, while such a perturbation does not change which voters are won by~$q$.
(The latter is true because voters at equal distance from $q$ and~$P$ are won by player~\Pone.)

Define $B(q)\in\B$ to be the block containing~$q$.
We start by looking more closely at which voters $q$ might win from a block $B\neq B(q)$.
Define $V(B) := V\cap B$ to be the voters lying in~$B$.
Let $\Rl$ be the axis-aligned ray emanating from $q$ and going the left, 
and define $\Ru,\Rr,\Rd$ similarly.  Let $e$ be the first edge of $B$ that is hit by~$\Rr$
and define
\[
\Vr(B) := \{ v\in V(B): \mbox{ $v$ lies in the horizontal half-strip whose left edge is $e$}  \}.
\]
Define the sets $\Vu(B)$, $\Vd(B)$, and $\Vl(B)$ similarly. 
\begin{figure}[b]
\centering
\includegraphics{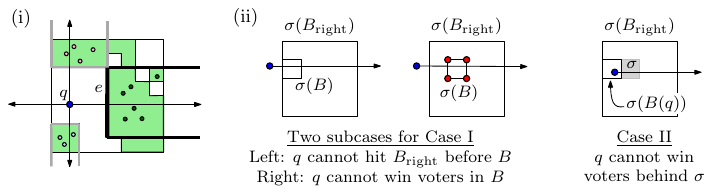}
\caption{(i) The sets of voters that $q$ might be able to win in the green block~$B$. 
        (ii) Illustration for the proof of Lemma~\ref{lem:five-child-regions}.}
\label{fig:five-child-regions}
\end{figure}
See Fig.~\ref{fig:five-child-regions}(i), where the voters from $\Vr(B)$ are shown in dark green,
the voters from $\Vu(B)$ and $\Vd(B)$ are shown in orange and blue, respectively,
and $\Vl(B)=\emptyset$. 
Because $P$ contains all vertices of~$B$ by Lemma~\ref{lem:vertices-in-P}, 
the only voters from $V(B)$ that $q$ can possibly win are the voters in 
$\Vl(B)\cup\Vu(B)\cup\Vr(B)\cup\Vd(B)$. (In fact, we could restrict these four
sets even a bit more, but this is not needed for our arguments.)

Let $\Br\neq B(q)$ be the first block in $\B$ hit by $\Rr$, and
define $\Bl,\Bu,\Bd$ similarly for the rays $\Rl,\Ru,\Rd$.
The next lemma states that there is only one block $B\neq B(q)$ for which 
$q$ might be able to win voters in $\Vr(B)$, namely~$\Br$.
Similarly, $q$ can only when voters from $\Vl(B)$ for $B=\Bl$, and so on.
\begin{lemma}\label{lem:five-child-regions}
If $q$ wins voters from $\Vr(B)$, where $B\neq B(q)$, then $B=\Br$. 
\end{lemma}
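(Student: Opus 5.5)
Assume $q$ wins a voter $v\in\Vr(B)$ with $B\neq B(q)$ and $B\neq\Br$; we aim for a contradiction. Let $e$ be the first edge of $B$ hit by $\Rr$, let $x_e$ be its $x$-coordinate, and let $h$ be the horizontal half-strip with left edge~$e$, so $v\in h$. Since $B\neq B(q)$ and $\Br$ is the first block other than $B(q)$ hit by $\Rr$, the ray $\Rr$ meets $\Br$ before it reaches~$e$. Let $e'$ be the first edge of $\Br$ hit by $\Rr$, at the point $r'$. Thus $\Rr$ crosses some boundary of $\Br$ strictly to the left of $x_e$, and this boundary separates $q$ from $v$ in the horizontal direction.

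The main step is to find a point of $P$ that is at least as close to $v$ as $q$ is. The natural candidates are the endpoints of the edge of $\Br$ that $\Rr$ crosses; these are vertices of $\Br$, so by Lemma~\ref{lem:vertices-in-P} they lie in $P$ (up to the root-corner exception). The geometry is as follows. The segment $qr'$ lies to the left of $e'$, while $v$ lies to the right of $x_e > x(r')$. The vertical segment through $r'$ is the edge $e'$, and its endpoints are at the top and bottom. If $v$'s $y$-coordinate lies within the $y$-range of $e'$, then the straight segment from $q$ to $v$ crosses the line $x=x(r')$. The case split is then:
\begin{itemize}
\item If the crossing point lies on $e'$, we must rule out this situation by showing that $v$ lies in $\Br$ or that $e'$ blocks $v$. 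The point $v\in B$ lies behind $e'$ horizontally, and by the disjointness and nesting structure (Observation~\ref{obs:subset}) $B$ cannot occupy the region just right of $e'$ at $v$'s height unless $B=\Br$. This is the step I expect to be hardest to make rigorous.
\item If the crossing point lies outside $e'$, then a corner of $e'$ is close to $v$.
\end{itemize}

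A cleaner alternative, which I would try first, works with the half-strip and the ray directly. The half-strip $h$ extends from $e$ vertically over the $y$-extent of $e$, and $q$'s horizontal line passes through $e$. The ray $\Rr$ enters $\Br$ at $r'$ before $e$, and exits $\Br$ before $e$ because $B\neq\Br$. By considering the connected part of $\Br$ traversed, there is a vertical boundary segment of $\Br$ spanning $q$'s height. Its extension must fail to cover $v$'s $y$-coordinate (otherwise $B$ would not be the block holding $v$'s strip), so the segment has an endpoint $p$ with $y$-coordinate between $y(q)$ and $y(v)$ and $x$-coordinate between $x(q)$ and $x(v)$. That endpoint is a vertex of $\Br$, hence $p\in P$. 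Because $p$ lies in the axis-parallel bounding box spanned by $q$ and $v$, we get $\|p-v\|\le\|q-v\|$ with strict inequality unless $p=q$, and $p\neq q$ holds by the general-position perturbation. So $P$ wins $v$ (ties go to \Pone), which is a contradiction.

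The remaining difficulty is justifying that the boundary of $\Br$ crossed by $\Rr$ has an endpoint inside the bounding box of $q$ and $v$, rather than extending past $v$'s height. The plan is to argue this: if the vertical boundary piece of $\Br$ extended past $y(v)$, the horizontal segment from that piece to $v$ at height $y(v)$ would cross only blocks consistent with the axis-parallel block structure, and it would force the first edge of $B$ met along $\Rr$ to be shadowed. This contradicts the fact that $e$ is hit by $\Rr$ and that $v$ lies in its strip. One also has to handle the exception in Lemma~\ref{lem:vertices-in-P} for corners of $\sigma(B_0)$. Since $q\in\sigma(B_0)$ and $\Rr$ would exit $\sigma(B_0)$ there, such corners never lie strictly inside the relevant box, so this exception does not arise.
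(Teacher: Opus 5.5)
Your proposal has a genuine gap, at exactly the step you flag as hardest. The bounding-box principle itself is fine: a point of $P$ in the axis-parallel box spanned by $q$ and $v$ is at least as close to $v$ as $q$ is. What fails is the structural claim feeding it. You claim that the vertical boundary piece of $\Br$ crossed by $\Rr$ must have an endpoint with $y$-coordinate between $y(q)$ and $y(v)$, ``otherwise $B$ would not be the block holding $v$'s strip''. This is false for valid block configurations.

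Take $\sigma(\Br)=[0,4]^2$, $\sigma(B(q))=[0,2]\times[2,4]$ and $\sigma(B)=[3,4]^2$. Let $B(q)$ and $B$ equal their cells, let $\Br=\sigma(\Br)\setminus(\sigma(B(q))\cup\sigma(B))$, and put $q=(1.9,3.3)$, $v=(3.05,3.3)$.
\begin{itemize}
\item $\Rr$ leaves $B(q)$ at $x=2$, crosses $\Br$, and enters $B$ through $e=\{3\}\times[3,4]$. So $v\in\Vr(B)$ and $B\notin\{B(q),\Br\}$.
\item Both vertical edges of $\Br$ met by $\Rr$, namely $\{2\}\times[2,4]$ and $\{3\}\times[3,4]$, span $y(v)$.
\item No vertex of $\Br$, and no vertex or cell corner of any of the three blocks, lies in the box spanned by $q$ and $v$.
\end{itemize}
Here $q$ does not win $v$, so this is not a counterexample to the lemma. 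But what beats $q$ is, e.g., the corner $(3,3)$ of $\sigma(B)$, which lies \emph{outside} the box: it is at distance about $0.3$ from $v$, whereas $\|q-v\|=1.15$. So the contradiction cannot come from the combinatorics of the subdivision plus the bounding-box test; it has to be metric. Your first case split has the same defect: a boundary edge of $\Br$ does not ``block'' anything, only points of $P$ do.

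The missing ingredient is the quantitative quadtree geometry the paper uses. By \bb, the corners of the \emph{cells} $\sigma(\Br)$ and $\sigma(B)$ are in $P$, not just block vertices. By Observation~\ref{obs:distance}, cells that do not touch are separated by at least the smaller cell size. The paper splits on whether $q\in\sigma(\Br)$.
\begin{itemize}
\item If $q\notin\sigma(\Br)$, the corners of $\sigma(\Br)$ confine $v$ to $\sigma(\Br)$, and Observation~\ref{obs:subset} then forces $\sigma(B)\subset\sigma(\Br)$. Either the left edge of $\sigma(B)$ lies on that of $\sigma(\Br)$, so $\Rr$ would enter $\sigma(B)$ (which is disjoint from $\Br$) before reaching $\Br$. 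Or $q$ is at distance at least $\size(\sigma(B))$ from $\sigma(B)$, and then a left corner of $\sigma(B)$ is closer to $v$ than $q$.
\item If $q\in\sigma(\Br)$, the paper runs the same kind of nesting-plus-distance argument with the square of size $\size(\sigma(B(q)))$ immediately to the right of $\sigma(B(q))$, using Observation~\ref{obs:plus}.
\end{itemize}
Your proof needs such a distance argument in place of the claim that some vertex of $\Br$ lies in the bounding box.
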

\begin{proof}
Suppose for a contradiction that $q$ wins voters from $\Vr(B)$ for some block
$B\not\in \{B(q),\Br\}$. We distinguish two cases.
\\[2mm]
\emph{Case~I: $q\not\in \sigma(\Br)$. $\mbox{}$}  \\[1mm]
        Since the corners of $\sigma(\Br)$ are in~$P$ by~\bb,
        the point~$q$ cannot win voters to the right of~$\sigma(\Br)$. 
        Hence, if $q$ wins voters from $\Vr(B)$, then $B$ must lie at least partially inside $\sigma(\Br)$.
        Now consider $\sigma(B)$. We cannot have $\sigma(\Br)\subset \sigma(B)$
        by Observation~\ref{obs:subset}. Hence, $\sigma(B)\subset \sigma(\Br)$
        and so $\Br \cap \sigma(B)=\emptyset$.
        We now have two subcases, illustrated in Fig.~\ref{fig:five-child-regions}(ii).
        \begin{itemize}
        \item If the left edge of $\sigma(B)$ is contained in the left edge of $\sigma(\Br)$,
            then $\Rr$ would hit $B$ before $\Br$, contradicting the definition of~$\Br$.
        \item On the other hand, if the left edge of $\sigma(B)$ is not contained 
            in the left edge of $\sigma(\Br)$, then $\dist(\sigma(\Br),\sigma(B))\geq \size(\sigma(B))$
            by Observation~\ref{obs:distance}. Since $P$ contains the four corners
            of $\sigma(B)$, this contradicts that $q$ wins voters from~$\Vr(B)$.
        \end{itemize}
\emph{Case~II: $q\in \sigma(\Br)$. $\mbox{}$}  \\[1mm]
        We cannot have $\sigma(\Br)\subset \sigma(B(q))$,
        otherwise $B(q)\cap \sigma(\Br) = \emptyset$ by Observation~\ref{obs:subset}, 
        which contradicts $q\in B(q)$.
        Hence, $\sigma(B(q))\subset \sigma(\Br)$ and $\Br\cap \sigma(B(q))=\emptyset$.        

        Consider the square~$\sigma$ with the same size of $\sigma(B(q))$ and 
        immediately to the right of~$\sigma(B(q)))$; see the grey square in
        Fig.~\ref{fig:five-child-regions}(ii). We must have $\sigma\subset \sigma(\Br)$,
        otherwise the right edge of $\sigma(B(q))$ would be contained in the right
        edge of $\sigma(\Br)$ and so $\Rr$ would exit $\sigma(\Br)$ before it can hit~$\Br$.
        By Observation~\ref{obs:plus}, point $q$ cannot win voters to the right of $\sigma$.
        Hence, $B\cap \sigma\neq\emptyset$. 
        Now consider the relative position of $\sigma(B)$ and $\sigma$.
        There are two subcases.
        \begin{itemize}
        \item  If $\sigma(B)\subset \sigma$, then either the distance from $q$ to $B$
            is at least~$\size(\sigma(B))$ by Observation~\ref{obs:distance},
            contradicting that $q$ wins voters from~$\Vr(B)$; or $\sigma(B)$ lies immediately
            to the right of $\sigma(B(q))$, in which case $\Rr$ cannot hit $\Br$ before $B$.
        \item Otherwise, $\sigma\subset \sigma(B)$. If $\sigma(B)\subset \sigma(\Br)$, 
            then $\Br\cap \sigma(B)=\emptyset$ by Observation~\ref{obs:subset}, contradicting 
            (since $\sigma \subset \sigma(B)$) that $\Rr$ hits $\Br$ before $B$. 
            Hence, $\sigma(\Br)\subset \sigma(B)$.
            But then $B\cap \sigma(\Br)=\emptyset$, which contradicts that $B\cap\sigma\neq\emptyset$. 
        \end{itemize}
\end{proof}
Lemma~\ref{lem:five-child-regions} implies that $q$ can only win voters from
the five blocks $B(q)$, $\Bl$, $\Br$, $\Bu$, and $\Bd$. The next lemma shows that
$q$ cannot win voters from all these blocks simultaneously.
\begin{lemma}\label{lem:three-child-regions}
Point $q$ can win voters from at most three of the blocks $B(q)$, $\Bl$, $\Br$, $\Bu$, and~$\Bd$.
\end{lemma}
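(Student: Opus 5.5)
We plan to argue via the geometry of $\sigma(B(q))$ and the corner points it contributes to~$P$. Let $\sigma:=\sigma(B(q))$, and let $q$ lie in one of its four quadrants, say without loss of generality $q\in\SW(\sigma)$. The case $B(q)=B_0$ we treat separately; there the rays may leave $\sigma(B_0)$ and hit nothing, which only helps.

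The key claim is that $q$ cannot win voters from $\Bu$ or $\Br$ when it sits in the lower-left quadrant, or more precisely that at most one of the two "far" directions survives. Property~\bb gives that $P$ contains the corners of $\sigma$. If $B(q)$ is a type-I block it also contains the $3\times3$ grid of the parent cell, so in particular the centre of $\sigma$ and the midpoints of its edges lie in $P$ (these are grid points of the parent cell at resolution $\size(\sigma)/2$). If it is type-II, the eight points of Fig.~\ref{fig:compressed-quadtree}(iii) play the same role.

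We first make the distance argument precise. A voter $v\in\Vr(\Br)$ lies to the right of the first edge $e$ of $\Br$ hit by $\Rr$, and by Observation~\ref{obs:distance} together with the fact that $\Br\neq B(q)$, $e$ lies on or beyond the right edge of $\SW(\sigma)$. We then compare $|qv|$ with the distance from $v$ to the nearest point of $P$ among the midpoint of the bottom edge, the centre, the midpoint of the right edge and the corners. For $q$ in the lower-left quadrant and $v$ in the horizontal half-strip to the right, one of the points on the vertical line $x=x_{\text{mid}}$ or $x=x_{\text{right}}$ of $\sigma$ with a suitable $y$-coordinate is closer to $v$ than $q$ is. The heights of these points are spaced by $\size(\sigma)/2$, while $q$ lies at horizontal distance at least something from them.

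This alone does not immediately kill $\Br$, so we instead pair the directions. We show that $q$ cannot simultaneously win from both $\Bl$ and $\Br$. Winning $v_r\in\Vr(\Br)$ forces the disk centred at $v_r$ through $q$ to be empty of $P$, and winning $v_l\in\Vl(\Bl)$ forces the analogous disk centred at $v_l$ to be empty. Both disks contain $q$ and extend horizontally across the strip. Their union then contains the whole horizontal segment through $q$ across $\sigma$, and some vertex, namely a grid point or a corner, within vertical distance $\size(\sigma)/2$ of that segment. Since the disks have radius at least the horizontal extent, such a point must lie in one of them. The same argument handles $\Bu$ and $\Bd$. Thus at most one of each opposite pair survives, giving at most $1+1+1=3$ blocks.

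The main obstacle is this last containment step. The disks are only guaranteed to reach a grid point if $q$ is far from the horizontal grid lines, so we will need a careful case split on the position of $q$ relative to the $y$-grid. That split uses Lemma~\ref{lem:vertices-in-P} to get vertices of the neighbouring blocks when $\sigma(\Bl)$ or $\sigma(\Br)$ is smaller than $\sigma$, and Observation~\ref{obs:plus} to confine everything to $\plus(\sigma)$ when they are larger.
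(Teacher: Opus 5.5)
Your central claim, that $q$ can never win voters from both members of an opposite pair (which is what gives the count $1+1+1$), is false. So the lemma cannot be proved along these lines. In the hard case the paper's count has a different shape: $B(q)$, the smallest won neighbour (say $\Br$), and at most one of $\Bl,\Bu,\Bd$. The combination $\Bl$ together with $\Br$ really occurs.

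For example, let a region with cell $[0,2]\times[-1,1]$ have $\Br$ as its \NW-child. Then $\sigma(\Br)=[0,1]^2$, and $P$ contains $\{0,1,2\}\times\{-1,0,1\}$, the outside point $(-1,0)$, and the other outside points $(3,0),(1,\pm2)$. Let a region with cell $[-10,6]\times[-1,15]$ have $B(q)$ as its \SE-child (cell $[-2,6]\times[-1,7]$, with $[0,2]\times[-1,1]$ as a hole) and $\Bl$ as its \SW-child (cell $[-10,-2]\times[-1,7]$). Its 13 points lie on $\{-10,-2,6\}\times\{-1,7,15\}$ or farther out.

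Take $q=(-0.45,0.5)$. The voter $(0.01,0.5)\in\Br$ is at distance $0.46$ from $q$ but about $0.5$ from $(0,0)$ and $(0,1)$. The voter $(-2.01,2)\in\Bl$ is at distance about $2.16$ from $q$ but about $2.24$, $2.25$ and $3$ from $(-1,0)$, $(0,1)$ and $(-2,-1)$, which are its nearest points of $P$. The rays $\Rr$ and $\Rl$ hit $\Br$ and $\Bl$ first. So $q$ wins from both $\Bl$ and $\Br$.

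The supporting steps fail for related reasons.
\begin{itemize}
\item A type-I block receives the $3\times3$ grid of its \emph{parent} cell. That grid has spacing $\size(\sigma(B))$, so it gives only the corners of $\sigma(B(q))$ and of its siblings, not the centre and edge midpoints of $\sigma(B(q))$. The type-II configuration does not supply them either.
\item The neighbouring blocks can be much smaller than $\sigma(B(q))$ and sit in holes of $B(q)$ right next to $q$, as $\Br$ does above inside $\SW(\sigma(B(q)))$. So $e$ need not lie on or beyond the right edge of $\SW(\sigma)$. This is exactly the case where all the work lies.
\item The empty disk centred at $v_r$ through $q$ can be far smaller than $\size(\sigma)/2$. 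In the example it has radius $0.46$ while $\size(\sigma)/2=4$. Points of $P$ close to the horizontal segment need not lie in either disk, so the containment step you flag is not merely unproven but false.
\end{itemize}

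What is needed is the paper's case split.
\begin{itemize}
\item If $\sigma(B(q))$ is no larger than the cells of the won neighbours, the four corners of $\sigma(B(q))$ confine $q$ to the two neighbours adjacent to its quadrant. This is where your quadrant idea works.
\item Otherwise, with $\Br$ the smallest won neighbour, $q$ is pinned in the right half of the square $\sigma$ to the left of $\sigma'$. $\Bd$ is handled by subcases on whether $\sigma(\Bd)=\SE(\pa(\sigma))$, using Observation~\ref{obs:subset}. Then $\Bu$ and $\Bl$ are shown to be mutually exclusive, via the bisector of $q$ and $p_2$ and via the rectangle spanned by $e_{\mathrm{up}}$ and $e_{\mathrm{left}}$, whose relevant endpoint lies in $P$ by Lemma~\ref{lem:vertices-in-P}.
\end{itemize}
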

\begin{proof}
First suppose that the size of $\sigma(B(q))$ is at most the size of any of the
four cells $\sigma(\Bl),\ldots,\sigma(\Bd)$ from which $q$ wins voters. By Observation~\ref{obs:subset},
this implies that all four blocks
$\Bl,\Br,\Bu,\Bd$ lie outside~$\sigma(B(q))$. Then it is easy to see 
that $q$ can win voters from at most two of the four blocks $\Bl,\Br,\Bu,\Bd$,
because all four corners of $\sigma(B(q))$ are in~$P$ by~\bb.
For instance, if $q$ lies in the \NE-quadrant of $\sigma(B(q))$, then
$q$ can only win voters from $\Br$ and $\Bu$; the other cases are symmetrical.

Now suppose that $\sigma(B(q))$ is larger than $\sigma(\Br)$, which we assume without
loss of generality to be a smallest cell from which $q$ wins voters
among the four cells $\sigma(\Bl),\ldots,\sigma(\Bd)$. 
We have two cases. 
\\[2mm]
\emph{Case~I: $q\not\in \sigma(\pa(\Br))$. $\mbox{}$}  \\[1mm]
      Note that $\sigma(\Br)$ must either be the \NW- or \SW-quadrant of $\sigma(\pa(\Br))$,
      because otherwise $q\not\in\plus(\Br)$ and $q$ cannot win voters from $\Br$
      by Observation~\ref{obs:plus}. Assume without loss of generality that $\sigma(\Br) = \NW(\sigma(\pa(\Br)))$.
      Then~$q$ must be located in the square of the same size as~$\sigma(\Br)$ and immediately 
      to its left. In fact, $q$ must lie in the right half of this square.
      We now define two blocks, $\sigma$ and $\sigma'$ that
      play a crucial role in the proof. Their definition depends on whether $\Br$
      is a type-I or a type-II block.
     \begin{figure}[b]
     \centering
     \includegraphics{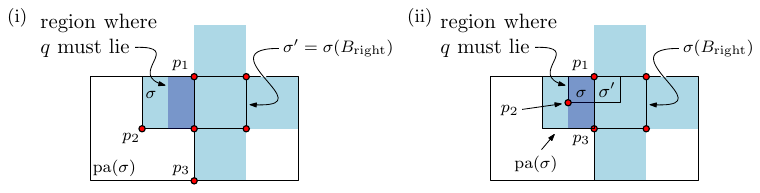}
     \caption{Two cases for the definition of $\sigma'$ and $\sigma$, 
              (i) when $\Br$ is a type-I block and (ii) when $\Br$ is a type-II block.
              In the latter case $\sigma'$ and $\sigma'$ could also lie in the bottom half of their parent
              regions, depending on where $q$ lies.}
     \label{fig:three-child-regions-sigma-prime}
     \end{figure}
      \begin{itemize}
      \item If $\Br$ is a type-I block, then we define $\sigma' := \sigma(\Br)$ and we define
            $\sigma$ to be the square of the same size as~$\sigma'$ and immediately to its left.
            Note that $q\in \sigma$, since $q$ wins voters from $\sigma(\Br)$.
            See Fig.~\ref{fig:three-child-regions-sigma-prime}(i).
      \item  If $\Br$ is a type-II block, then we define $\sigma' := \NW(\sigma(\Br))$ 
             or $\sigma' = \SW(\sigma(\Br))$ and we define $\sigma$ to be the square of the same 
             size as~$\sigma'$ and immediately to its left. Whether we choose
             $\sigma' := \NW(\sigma(\Br))$ or $\sigma' = \SW(\sigma(\Br))$ depends on the
             position of~$q$: the choice is made such that the square $\sigma$ to the
             left of $\sigma'$ contains~$q$. See Fig.~\ref{fig:three-child-regions-sigma-prime}(ii)
             for an example. Since we will not use $\pa(\pa(\sigma'))$ 
             in the proof, these two choices are symmetric as far as the proof is concerned---we
             only need to swap the up- and down-direction.
      \end{itemize}
      We now continue with the proof of Case~I. All statements referring to $\sigma$ and $\sigma'$
      will hold for both definitions just given.
      
      Observe that $\sigma(\Bd)\neq \sigma$, since otherwise $\sigma(\Bd)\subset \sigma(B(q))$,
      contradicting by Observation~\ref{obs:subset} that~$q\in B(q)$. We will now consider 
      three subcases. In each subcase we argue that either we are done---we will have shown
      that $q$ wins voters from at most one of the blocks $\Bd$, $\Bl$, and $\Bu$---or
      $q$ cannot win voters from $\Bd$. After discussing the three subcases, we then
      continue the proof under the assumption that $q$ does not win voters from $\Bd$.
      \begin{itemize}
      \item \emph{Subcase~(i): $\sigma(\Bd)=\SE(\pa(\sigma))$ and $\Bd$ is a type-I block.} \\
            In the case all four corners of $\sigma$ are in $P$. If $q\in\NE(\sigma)$ 
            then $q$ can only win voters from $\Bu$ and if $q\in\SE(\sigma)$ then $q$ can 
            only win voters from $\Bd$  (this is in addition to voters won from $B(q)$ and $\Br$),
            and so we are done.
      \item \emph{Subcase~(ii): $\sigma(\Bd)=\SE(\pa(\sigma))$ and $\Bd$ is a type-II block.} \\
            If $q\in \SE(\sigma)$ then $q$ cannot win voters from $\Bu$ or $\Bl$, and so we are
            done. Otherwise $q$ cannot win voters from $\Bd$, as claimed.
      \item \emph{Subcase~(iii): $\sigma(\Bd)\neq \SE(\pa(\sigma))$.} \\
             If $\Bd\cap \pa(\sigma)\neq\emptyset$ then both $\Bd$ and $B(q)$ intersect~$\pa(\sigma)$,
             and both $\sigma(\Bd)$ and $\sigma(B(q))$ contain $\pa(\sigma)$. But this is
             impossible due to Observation~\ref{obs:subset}. Hence, $\Bd$ must
             lie below $\pa(\sigma)$.
             We claim that then $q$ cannot win voters from $\Bd$.
             The closest $q$ can be to $\Bd$ is when it lies on the bottom line segment of $\sigma$. 
             Hence, any voter in $\Bd$ won by~$q$ must be closer to that segment than 
             to $p_3$, and also than to~$p_2$. (See Fig.~\ref{fig:three-child-regions-sigma-prime}
             for the locations of $p_2$ and $p_3$.) But this is clearly impossible.
             Hence, $q$ cannot win voters from $\Bd$.
      \end{itemize}
      Thus, in the remainder of the proof for Case~I we can assume 
      that $q$ does not win voters in $\Bd$. Hence, it suffices
      to show that $q$ cannot win voters from $\Bu$ and $\Bl$
      simultaneously. To this end, we assume $q$ wins a voter $\vu$ from $\Bu$ and
      a voter $\vl$ from $\Bl$ and then derive a contradiction.
      
      Let $e_{\mathrm{up}}$ be the first edge of $\Bu$ hit by $\Ru$ and let
      $e_{\mathrm{left}}$ be defined analogously; see Fig.~\ref{fig:eup-and-eleft}.
      Note that $e_{\mathrm{up}}$ and $e_{\mathrm{left}}$ must lie outside $\pa(\sigma)$, otherwise we
      obtain a contradiction with Observation~\ref{obs:subset}. 
     \begin{figure}[b]
     \centering
     \includegraphics{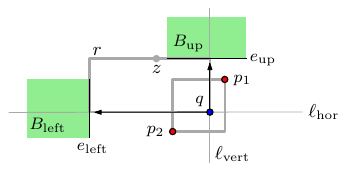}
     \caption{Definition of $e_{\mathrm{left}}$ and $e_{\mathrm{up}}$, and $r$ and $z$.}
     \label{fig:eup-and-eleft}
     \end{figure}

      Let $\ell_{\mathrm{hor}}(q)$ be the horizontal line through~$q$. 
      \begin{claim}
       $\vl$ must lie above $\ell_{\mathrm{hor}}(q)$.
      \end{claim}
      \emph{Proof of Claim.} 
      We need to show that the perpendicular bisector of $q$ and $p_2$ 
      will always intersect the left edge of $\pa(\sigma)$ above~$\ell_{\mathrm{hor}}(q)$.
      For the situation in Fig.~\ref{fig:three-child-regions-sigma-prime}(i) this is relatively
      easy to see, since $q$ lies relatively far to the left compared to~$p_2$.
      For  situation in Fig.~\ref{fig:three-child-regions-sigma-prime}(ii),
      it follows from the following argument. To win voters in $\Br$, the point~$q$
      must lie inside the circle $C$ through $p_1,p_2,p_3$. Now, suppose $q$ actually lies on~$C$
      and let $\alpha := \angle q zp_2$, where $z$ is the center of $C$. Thus
      the bisector of $q$ and $p_3$ has slope $-\tan(\alpha/2)$. The result then
      follows from the fact that $2 \tan (\alpha/2)-\sin \alpha>0$ for $0<\alpha<\pi/2$.
      \qed
      \medskip
      
      \noindent For $\vu$ the situation is slightly different: $q$ can, in fact,  win voters to 
      the right of $\ell_{\mathrm{vert}}(q)$, the vertical line through~$q$. In that case,
      however, it cannot win $\vl$.
      \begin{claim}
       If $q$ wins a voter from $\Bu$ right of $\ell_{\mathrm{vert}}(q)$, then
       $q$ cannot win a voter from $\Bl$.
      \end{claim}
      \emph{Proof of Claim.}
          It follows from Observation~\ref{obs:distance} 
          that $e_{\mathrm{up}}$ must overlap with the top edge of~$\sigma$.
          Because the edges $e_{\mathrm{up}}$ and $e_{\mathrm{left}}$ cannot intersect, one
          of them must end when or before the two meet. 
          
          If $e_{\mathrm{left}}$ ends before meeting (the extension of) $e_{\mathrm{up}}$,
          then the top endpoint of $e_{\mathrm{up}}$, which is in $P$
          by Lemma~\ref{lem:vertices-in-P}, prevents $q$
          from winning voters from $\Bl$. 
          
          So now assume that $e_{\mathrm{up}}$ ends before meeting (the extension of) $e_{\mathrm{left}}$.
          Then the left endpoint of $e_{\mathrm{up}}$, which we denote by~$p_4$, is in~$P$.
          Without loss of generality, set $p_1 = (0,1)$, $p_2 = (-1, 0)$ and $p_4 = (p_x,1)$.
          Now, winning voters from $\Bu$ means $q$ must lie inside the circle
          $C_{\mathrm{up}}$ with center on $e_{\mathrm{up}}$ that goes through
          $p_1$ and $p_4$. Thus it has center $c = (\frac{p_x}{2}, 0)$.
          It must also lie in the circle $C$ through $p_1,p_2,p_3$
          so it can win voters from $\Br$. The line through the circle centers
          makes an angle $\alpha := \arctan \frac{p_x}{2}$ with the line $x=0$.
          The circles intersect at~$p_1$, which lies on the line $x=0$,
          so their other intersection point lies on the line $\ell$
          that makes an angle $2\alpha$ with $x=0$.
          If $q$ is to win voters from both $\Br$ and $\Bu$
          it must lie between $\ell$ and $x=0$. Next we show that
          this implies that $q$ cannot win voters from $\Bl$.
          
          We first show that if $q=\ell\cap C$, then $p_4$ prevents $q$ from winning 
          voters in $\Bl$.
          Because then $p_4$ and $q$ both lie on $C_{\mathrm{up}}$, their perpendicular
          bisector $b(p_4,q)$ is the angular bisector of $\angle q c p_4$.
          Note that $\angle q c p_4 = 2\alpha$. Indeed, the line through the circle centers
          makes a right-angled triangle together with $y=1$ and $x=0$, so the
          angle at $c$ must be $\frac{1}{2} \pi - \alpha$. 
          Hence, $\angle q c p_1 = \pi-2\alpha$, and so $\angle q c p_4 = 2\alpha$.
          Thus,  $b(p_4,q)$ makes an angle $\alpha$ with $y=1$ and so it intersects the line $x=-2$ at height
          $y= 1 - (2+\frac{p_x}{2}) \tan\alpha$ which is $1 - (2 + \tan\alpha) \tan\alpha$.
          For $0 < 2\alpha < \pi/2$ this is below $\ell_{\mathrm{hor}}(q)$
          which lies at $y = \cos 2\alpha$. 
          By the previous Claim, this means that $q$ cannot win voters from $\Bl$.
          Therefore, $q$ cannot win voters from $\Bl$.

          To finish the proof, we must argue that $q$ cannot win voters from $\Bl$ either
          when $q\neq \ell\cap C$. It clear that moving $q$ to the left helps to win voters
          in $\Bl$, so we can assume that $q\in C$. Then it is not hard to see (by following the calculations 
          above)
          that the best position for $q$ is $\ell\cap C$, for which we just showed
          that $q$ cannot win voters in $\Bl$. This finishes the proof of the claim.
      \qed 
      \medskip
      
      \noindent We can now assume $\vl$ lies above $\ell_{\mathrm{vert}}(q)$ and
      $\vu$ lies to the left of $\ell_{\mathrm{vert}}(q)$. We will show that this leads
      to a contradiction. To this end, consider the rectangle~$r$ whose bottom-right corner
      is $q$, whose top edge overlaps $e_{\mathrm{up}}$ and whose left edge
      overlaps with $e_{\mathrm{left}}$; see Fig.~\ref{fig:eup-and-eleft}. 
      Then the left edge of~$r$ contains the top endpoint of $e_{\mathrm{left}}$ 
      and/or the top edge of~$r$ contains the left endpoint of $e_{\mathrm{up}}$.
      By Lemma~\ref{lem:vertices-in-P}, we thus know that there is
      a point $p_4\in P$ lying on the left or top edge of~$r$.
      Now assume without loss of generality that the top edge of~$r$
      is at least as long as its left edge, and let $z\in e_{\mathrm{up}}$ be the point
      such that the $q_x-z_x=z_y-q_y$. Now, if $p_4$ lies on the left edge of~$r$
      or to the left of $z$ on the top edge, then $p_4$ prevents $q$ from winning~$\vl$.
      On the other hand, if $p_4$ lies to the right of $z$ on the top edge of~$r$,
      then $p_4$ prevents $q$ from winning $\vu$.
      So in both cases we have a contradiction.
      \\[2mm]
\emph{Case~II: $q\in \sigma(\pa(\Br))$. $\mbox{}$}  \\[1mm]
    Assume without loss of generality that $\sigma(\Br)$ is one of the two northern
    quadrants of~$\pa(\sigma(\Br))$. We cannot have $q\in \sigma(\Br)$,
    since together with $\size(\sigma(\Br))<\size(\sigma(B(q)))$ this contradicts
    $q\in B(q)$, by Observation~\ref{obs:subset}.
    Hence, $q\in \NW(\pa(\sigma(\Br)))$ and $\sigma(\Br)=\NE(\pa(\sigma(\Br)))$.
    
    If $\Br$ is a type-I block then all corners of $\NW(\pa(\sigma(\Br)))$ are in $P$, 
    which (as we saw earlier) implies that $q$ can win voters from at
    most three blocks. 
    If $\Br$ is a type-II block, then we can follow the proof of Case~I.
    (For type-I blocks this is not true. The reason is that 
    in the proof of the first Claim, we use that $\Bl$ does not lie immediately
    to the left of $\sigma$, which is not true for type-I blocks in Case~2.
    Note that this still is true for type-II blocks in Case~2.)
\end{proof}
By construction, each block contains at most $m<n/|\R|$ voters, where $\R$ is the set
of regions created by Algorithm~\ref{alg:PlacePoints}.
Moreover, \Pone places 13 points per region in~$\R$, and so $k \leq 13|\R|$ points in total. 
Finally, Lemma~\ref{lem:three-child-regions} states that \Ptwo can win voters from at most three blocks.
We can conclude the following.
\begin{lemma}\label{lem:3m}
Let $V$ be a set of $n$ voters in $\mathbb R^2$. For any given $k$, the quadtree-based strategy
described above can guarantee that \Pone wins at least $\left(1-\frac{39}{k}\right)n$
voters by placing at most $k$ points, against any single point placed by player~\Ptwo.
\end{lemma}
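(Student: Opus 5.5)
The plan is to choose the parameter~$m$ as a function of $k$ and $n$, and then combine Lemma~\ref{lem:alg} with Lemma~\ref{lem:three-child-regions}. The count has three parts: how many points \Pone places, how many voters a single block can contain, and how many blocks \Ptwo can win voters from.

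First, the point budget. Algorithm~\ref{alg:PlacePoints} creates regions that each contain at least $m+1$ voters, and these regions are disjoint. Hence $|\R| \leq n/(m+1) < n/m$. Since \Pone places $13$ points per region, $|P| \leq 13|\R| < 13n/m$. Setting $m := \lceil 13n/k\rceil$ therefore gives $|P| < k$. If fewer than $k$ points are placed, the remaining points can be put anywhere, since adding points to $P$ never helps \Ptwo.

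Second, the size of a block. Each block is a child region $\NE(R)$, $\SE(R)$, $\SW(R)$ or $\NW(R)$ of some $R = R(\nu)$. Its voters all come from the free voters returned by one child $\mu$ of~$\nu$, and a child returns at most $m$ free voters (otherwise it would have created its own region). So every block other than $B_0$ contains at most $m$ voters. For the root block $B_0$, its voters are exactly the set $V_\mathrm{free}$ returned at the root, which by construction has size at most~$m$.

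Third, the number of blocks. The blocks together with $P$ satisfy \ba and \bb: every non-root block is a type-I block, because $P$ contains the $13$ points of its parent cell. By Lemma~\ref{lem:five-child-regions} and Lemma~\ref{lem:three-child-regions}, the single point $q$ of \Ptwo wins voters from at most three blocks. So \Ptwo wins at most $3m$ voters.

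Combining the three parts, \Pone wins at least $n - 3m \approx n - 39n/k = (1-\frac{39}{k})n$ voters. The one delicate point is rounding. With the ceiling, $3m \leq 39n/k + 3$, which leaves an additive error. I would remove it by exploiting slack in both inequalities. First, $|\R| \leq n/(m+1)$ is strictly better than $n/m$. Second, one can take $m$ to be the real threshold $13n/k$ itself, since the algorithm only compares integer counts against $m$. Then every block has at most $\lfloor 13n/k\rfloor$ voters, and the number of points is $13|\R| \leq 13n/(\lfloor 13n/k\rfloor+1) < k$. This gives $3m \leq 39n/k$ exactly.

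The remaining check is that the hypotheses of Lemma~\ref{lem:three-child-regions} hold. In particular, the root block $B_0$ has no points of $P$ at its corners, but the lemma already allows this by taking $q\in\sigma(B_0)$. I expect no further obstacle.
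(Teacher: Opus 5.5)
Your proposal is correct and is essentially the paper's argument: each block holds at most $m$ voters, \Pone uses $13|\R| < 13n/m$ points, and Lemma~\ref{lem:three-child-regions} limits \Ptwo to three blocks, which gives $3m \le 39n/k$. Your choice of the real threshold $m = 13n/k$, counted via $\lfloor m\rfloor + 1$, makes explicit the parameter choice and rounding that the paper leaves implicit. The one caveat is the degenerate range $k > 13n$: there $m<1$, and a leaf child still returns one free voter, so your block bound $\lfloor m\rfloor = 0$ fails; in that range \Pone can instead put a point on every voter and win all of them.
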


\subparagraph{A more refined strategy for player~\Pone.}
It can be shown that the analysis presented above is tight.
Hence, to get a better bound we need a better strategy.

Recall that each region $R\in\R$ contains between $m+1$ and $4m$ voters.
Currently, we use the same $13$ points for any~$R$, regardless of the 
exact number of voters it contains and how they are distributed over the child regions of $R$.
Our refined strategy takes this into account, and also incorporates the $\eps$-nets
developed in the previous section, as follows.
Let $n_R$ denote the number of voters in a region~$R\in \R$.
We consider two cases, with several subcases.
\begin{itemize}
\item \emph{Case~A: $m<n_R \leq \frac{16}{11}m$.} 
      We place eight points in total for $R$, as in Fig.~\ref{fig:compressed-quadtree}(iii).
      We also add between two and six extra points, depending on the subcase.
      \begin{itemize}
      \item If $m<n_R \leq \frac{7}{6}m$, we add two extra points, forming a $\frac{4}{7}$-net.
        \item If $\frac{7}{6}m <n_R \leq \frac{5}{4}m$, we add three extra points, forming a $\frac{8}{15}$-net.
        \item If $\frac{5}{4}m <n_R \leq \frac{4}{3}m$, we add four extra points, forming a $\frac{1}{2}$-net.
        \item If $\frac{4}{3}m <n_R \leq \frac{7}{5}m$, we add five extra points, forming a $\frac{10}{21}$-net.
        \item If $\frac{7}{5}m <n_R \leq \frac{16}{11}m$, we add six extra points, forming a $\frac{11}{24}$-net.\\[-2mm]
     \end{itemize}
     One can show that in each subcase above, player~\Ptwo wins at most
     $2m/3$
     voters from inside~$R$,
     due to the
     $\eps$-nets. For example, in the first case \Ptwo wins at most $(7m/6)\cdot (4/7)=2m/3$ voters,
     in the second subcase  \Ptwo wins at most $(5m/4)\cdot (8/15)=2m/3$ voters, etcetera.
     Furthermore, one easily verifies that in each subcase we have 
     $\frac{\mbox{\tiny number of voters in $R$}}{\mbox{\tiny number of points placed}} > m/10$. \\
\item \emph{Case~B: $\frac{16}{11}m < n_R \leq 4m$.} 
       We first place the same set of 13 points as in our original strategy.
       We add two or four extra points, depending on the subcase. \\[-2mm]
       \begin{itemize}
      \item If $\frac{16}{11}m <n_R \leq 2m$ we add two extra points, as follows.
            Consider the four child regions of~$R$. Then we add a centerpoint---in
            other words, a $\frac{2}{3}$-net of size~1---for the voters in
            the two child regions with the largest number of voters.
%
%
\item If $2m <n_R \leq 4m$ we add four extra points, namely a centerpoint for each of the four
       child regions of $R$. \\[-2mm]
       \end{itemize}
       Note that in both subcases, \Ptwo wins at most $2m/3$ voters from any child region. For the
       child regions where we placed a centerpoint, this holds because a child region contains
       at most $m$ voters by construction. For the two child regions where we did not place 
       a centerpoint in the first subcase, this holds because these child regions  
       contains at most $2m/3$ voters.  
       Furthermore, in both subcases 
       $\frac{\mbox{\tiny number of voters in $R$}}{\mbox{\tiny number of points placed}} > \frac{16}{165}m$.
\end{itemize}
\begin{theorem}\label{lem:refines}
    Let $V$ be a set of $n$ voters in $\mathbb R^2$. For any given $k$, the refined quadtree-based strategy
   can guarantee that \Pone wins at least
   $\left(1-\frac{20\frac{5}{8}}{k}\right)n$
    voters by placing at most $k$ points,  against any single point placed by player~\Ptwo.
\end{theorem}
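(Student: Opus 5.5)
The plan is to plug the refined per-region accounting into the block framework of Lemmas~\ref{lem:five-child-regions} and~\ref{lem:three-child-regions}. First I would redefine the block collection~$\B$ for the refined strategy. For a region $R$ handled in Case~B, the blocks are the (nonempty) child regions of~$R$, exactly as before, and these are type-I blocks. For a region $R$ handled in Case~A, I would take $R$ itself as a single block with $\sigma(B):=\sigma(R)$, which is a type-II block because the eight points of Fig.~\ref{fig:compressed-quadtree}(iii) are placed for it. Finally, the root block~$B_0$ is added as before. I would then check that \ba still holds, since the blocks are still produced bottom-up by Algorithm~\ref{alg:PlacePoints}, only with coarser granularity for Case-A regions, and that \bb holds. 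Once this is done, Lemma~\ref{lem:three-child-regions} applies verbatim: any single point $q$ of~\Ptwo wins voters from at most three blocks.

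Next I would bound what $q$ wins inside a single block by $2m/3$. The voters won by $q$ form the intersection of $V$ with the Voronoi cell of $q$ in $\Vor(P\cup\{q\})$. This cell is convex and contains no point of~$P$. So inside a block $B$, the voters won by $q$ lie in a convex range avoiding whatever $\eps$-net was placed for $B$, or for the relevant union of child regions. In Case~A the $\eps$-net bounds from Corollary~\ref{cor:epshalf} and the values $\eps_1,\eps_2,\eps_3$ give at most $\eps\, n_R\le 2m/3$; for the last subcase I would use Theorem~\ref{thm:l2lowk} to get $\eps_6\le 11/24$. In Case~B, each child region has at most $m$ voters, so a centerpoint caps the win at $2m/3$. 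The two smaller children in the $n_R\le 2m$ subcase have at most $2m/3$ voters, since the two largest carry at least half of $n_R$. A small point to verify is that a range intersecting the union of the two largest children is still convex, so the $2/3$-net applies to their union, and that this gives at most $2m/3$ per child block. The root block contains at most $m$ free voters. For it I would also place a centerpoint, costing one extra point, so that it too yields at most $2m/3$. Hence \Ptwo wins at most $3\cdot 2m/3=2m$ voters.

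Then I would count points. Each region has voters-per-point ratio greater than $\frac{16}{165}m$, since $m/10>\frac{16}{165}m$. The regions are disjoint, so the total number of points is less than $\frac{165\,n}{16m}+1$. Choosing $m$ so that this is at most~$k$, that is, $m\approx \frac{165n}{16k}$, gives a win for \Ptwo of at most $2m=\frac{165}{8}\cdot\frac{n}{k}=\frac{20\frac58}{k}\,n$, which is the claimed bound.

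I expect two places to need care. The first is verifying that Lemma~\ref{lem:three-child-regions} really covers the mixed situation, where type-II blocks are whole regions and type-I blocks are children, and in particular that Case~II of that proof, which distinguishes type-I and type-II, is consistent with Case-A regions being single blocks. The second is the bookkeeping of the root block together with the $+1$ and integrality of~$m$. The strict inequality in the ratio bound should absorb the extra point. If it does not, I would fall back to accepting an additive constant, as the introduction's $-6$ suggests.
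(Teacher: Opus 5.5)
Your proposal follows the paper's proof: the same block collection $\Bnew$ (Case-A regions as single type-II blocks, Case-B child regions as type-I blocks), Lemma~\ref{lem:three-child-regions} for at most three blocks, the $2m/3$ cap per block, and the $\frac{16}{165}m$ voters-per-point ratio. Your extra centerpoint for the root block $B_0$ is a care the paper's proof omits, since $B_0$ can hold up to $m$ voters with no points placed for it. It also costs nothing: taking the real (not necessarily integral) parameter $m:=\frac{165n}{16k}$, the region points form an integer strictly below $\frac{165n}{16m}=k$, so one more point still stays within~$k$, and no additive constant is needed.

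Two small slips need fixing. In the subcase $n_R\le 2m$, the two extra points are one centerpoint in each of the two largest children. A single centerpoint for their union would not give $2m/3$ per block, so drop that ``check''. Also, the two smaller children have at most $2m/3$ voters because the third-largest child has at most $n_R/3\le 2m/3$ voters; knowing only that the two largest carry half of~$n_R$ does not give this.
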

\begin{proof}
   The proof for the original quadtree-based strategy was based on two facts: 
   First, player~\Ptwo can win voters from at most three blocks $B\in\B$;
   see Lemma~\ref{lem:three-child-regions}. Second, any block $B\in\B$
   (which was a child region of some $R\in\R$) contains at most $m$ voters.

   In the refined strategy, we use a similar argument, but for a set $\Bnew$ of blocks
   defined as follows. For the regions $R\in \R$ that fall into Case~A,
   we put $R$ itself (instead of its child regions) as a type-II block into $\Bnew$.
   For the regions $R\in\R$ that fall into Case~B, we put their child regions
   as type-I blocks into~$\Bnew$.
   By Lemma~\ref{lem:three-child-regions}, \Ptwo can win voters from at most three blocks in~$\Bnew$.
   Moreover, our refined strategy ensures that \Ptwo wins at most
   $2m/3$
   voters from
   any $B\in\Bnew$. Thus \Ptwo wins at most
   $2m$
   voters in total. 
   
   Finally, for each $R\in \R$ we have 
   $\frac{\mbox{\tiny number of voters in $R$}}{\mbox{\tiny number of points placed}} > \frac{16}{165}m$.
   Hence, $m < \frac{165}{16k}n$ and so \Ptwo wins at most 
   $\frac{165}{8k}n = \frac{20\frac{5}{8}}{k}n$
   voters.
\end{proof}

\section{Strategies for the \texorpdfstring{$L_1$}{L1}-metric}
\label{sec:L1persL1}
We now consider the setting where the distances from
a voter to the points placed by \Pone or \Ptwo are measured in the $L_1$-metric.
We can use similar techniques as for $L_2$-metric, as explained next.

\subparagraph{Using $\eps$-nets.}\label{sec:L1low}
Recall that the point $q$ played by \Ptwo wins voters from the Voronoi cell of $q$ in $\Vor(P\cup\{q\})$. 
Unfortunately, Voronoi cells under the $L_1$-metric are not convex but only star-shaped, 
so we cannot use $\epsilon$-nets in the same way as before.
Instead, we can use the following observation. 
Let $p \in \Reals^2$ be a point played by player~\Pone, and consider the four quadrants defined
by the vertical line and the horizontal line containing~$p$. Then, in the $L_1$-metric,  a point~$q$ played by \Ptwo
cannot win voters from two opposite quadrants.
In other words, for each point $p$ played by \Pone, player~\Ptwo can only win voters in 
one of the two half-planes defined by the horizontal line containing~$p$, or in 
one of the two half-planes defined by the horizontal line containing~$p$. By taking the
intersection of the relevant half-planes defined by the points $p\in P$, we see that the
voters won by \Ptwo lie in an axis-parallel rectangle that does not contain a point from~$P$.
Hence, we have the following lemma.
\begin{lemma}
Let $V$ be a set of $n$ voters in $\mathbb R^2$.
If~\Pone places points according to an $\epsilon$-net for $V$ with respect to axis-parallel rectangles, 
then~\Ptwo always wins at most $\epsilon n$ voters.
\end{lemma}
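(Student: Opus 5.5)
The plan is to show that the set of voters won by \Ptwo's point $q$ is contained in an open axis-parallel rectangle (possibly unbounded) that contains no point of~$P$. Once this is established, the lemma follows directly from the definition of an $\eps$-net for axis-parallel rectangles: since $P$ is such a net, any rectangle avoiding $P$ contains at most $\eps n$ voters. Unbounded rectangles cause no trouble, because we may clip them to a bounding box of $V\cup P$.

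The key step is a local claim about a single point $p\in P$. Fix $p=(p_x,p_y)$ and let $q=(q_x,q_y)$. Suppose $q$ wins a voter $v$, so $\|v-q\|_1<\|v-p\|_1$ (ties go to \Pone). We claim $v$ lies strictly on $q$'s side of the vertical line through $p$ or strictly on $q$'s side of the horizontal line through $p$; that is, $v$ is not in the closed quadrant at $p$ opposite to $q$.

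To prove the claim, suppose $q_x\ge p_x$ and $q_y\ge p_y$, and suppose $v_x\le p_x$ and $v_y\le p_y$. Then
\[
\|v-q\|_1=(q_x-v_x)+(q_y-v_y)\ge (p_x-v_x)+(p_y-v_y)=\|v-p\|_1 ,
\]
which contradicts the strict inequality. The other sign patterns follow by symmetry. If $q_x=p_x$, both vertical sides qualify, but the argument still places $v$ outside a closed quadrant. We can also perturb $q$ infinitesimally, as was done earlier in the paper, to avoid such degenerate alignments.

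Next we set up the rectangle. For each $p\in P$, let $H_p$ be the open half-plane on $q$'s side of the vertical line through~$p$, and let $H'_p$ be the corresponding half-plane for the horizontal line. The claim says every voter won by $q$ lies in $H_p\cup H'_p$, but a union is not a rectangle, so this needs care. I would instead define the rectangle directly. Let
\[
x^- = \max\{p_x : p\in P,\ p_x<q_x\},\qquad x^+ = \min\{p_x : p\in P,\ p_x>q_x\},
\]
and define $y^\pm$ similarly. The natural candidate is $(x^-,x^+)\times(y^-,y^+)$, but it is not obvious that this works, and this is the main obstacle. The local claim only lets us discard opposite quadrants, not whole half-planes.

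The better route is to show that the won region $W$, which is the Voronoi cell of $q$, is contained in the open axis-parallel bounding box of $q$ and $v$ for each single won voter~$v$. More precisely, for a won voter $v$, the box $\mathrm{box}(q,v)$ contains no point of $P$ other than possibly on its boundary. Indeed, if $p\in\mathrm{box}(q,v)$ then $\|v-p\|_1+\|p-q\|_1=\|v-q\|_1$, so $\|v-p\|_1\le\|v-q\|_1$ and $q$ does not win~$v$.

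That alone still does not give a single rectangle. For the final step I would argue as follows. If $q$ wins voters $v$ and $w$, with $v$ to the left and $w$ above $q$, then a point $p$ in the rectangle spanned by these extremes would lie in a quadrant relative to which both $v$ and $w$ sit in positions the claim forbids. Checking this case analysis carefully is the step I expect to require the most work.
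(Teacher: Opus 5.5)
There is a genuine gap, and it is exactly the step you postpone. Your local claim excludes only the closed quadrant at $p$ opposite to $q$. Together with the box observation, this does not contain enough information to produce a single $P$-free rectangle, however the final case analysis is organised.

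Take $P=\{p\}$ with $p=(0,0)$, $q=(1,2)$, and the points $v=(-10,5)$ and $w=(5,-10)$. Neither point lies in the quadrant opposite to $q$, and neither $\mathrm{box}(q,v)$ nor $\mathrm{box}(q,w)$ contains $p$. Yet every axis-parallel rectangle containing both $v$ and $w$ contains $p$. The lemma survives only because $q$ does not actually win $w$: $\|w-q\|_1=16>15=\|w-p\|_1$, whereas $\|v-q\|_1=14<15$. Your two facts cannot detect this. Your ``natural candidate'' $(x^-,x^+)\times(y^-,y^+)$ also fails here: it equals $(0,\infty)\times(0,\infty)$, which misses the won voter $v$.

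The missing idea is the stronger per-point statement the paper uses. Besides the opposite quadrant, $q$ also cannot win voters in both quadrants adjacent to its own. Hence, for each $p$, all won voters lie in one open half-plane bounded by the horizontal or the vertical line through $p$.

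Concretely, suppose $|q_x-p_x|\le|q_y-p_y|$. Then $q_y\neq p_y$ because $q\neq p$; say $q_y>p_y$. For any $v$ with $v_y\le p_y$,
\[
\|v-q\|_1-\|v-p\|_1=|v_x-q_x|-|v_x-p_x|+(q_y-p_y)\ \ge\ |q_y-p_y|-|q_x-p_x|\ \ge\ 0,
\]
so $q$ does not win $v$, since ties go to the first player. Hence all voters won by $q$ lie in the open half-plane $\{y>p_y\}$. Symmetrically, when $|q_x-p_x|\ge|q_y-p_y|$ they lie in the open half-plane on $q$'s side of the vertical line through $p$.

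Intersecting these open half-planes over all $p\in P$ gives an axis-parallel rectangle that contains every voter won by $q$. It contains no point of $P$, because each $p$ lies on the boundary of its own half-plane. The $\eps$-net property then finishes the proof, after the clipping you already describe. With this claim in hand, the box observation and the quadrant case analysis are unnecessary.
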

It is known~\cite{Aronov2009,dulieu2006,Rachek2020} that there is a $(2/5)$-net of size two, a $(1/3)$-net of size three, a $(2/7)$-net
of size four, and a $(1/4)$-net of size 5, leading to the bounds in Table~\ref{tab:lowk_allbounds}.
Aronov \etal\cite{Aronov2009} conjecture that there is an $\frac{2}{k+3}$-net with $k$ points for any $k\geq 1$.
If true, this would mean this technique is also effective for large~$k$.

\subparagraph{The quadtree-based strategy in the $L_1$-metric.}
The quadtree-based strategy can also be applied for the $L_1$-metric.
Here it gives better bounds than for the $L_2$-metric,
because (as we will show) \Ptwo wins voters from at most two blocks
instead of three. Hence, \Ptwo can only win  $26 \frac{n}{k}$ voters,
instead of $39 \frac{n}{k}$. As in the $L_2$ metric, we further improve this
by using different types of blocks depending on the number and 
distribution of the voters, and by using $\eps$-nets. This leads to the
following theorem.
\begin{restatable}{theorem}{LOneBoundTwo}
\label{thm:L1-quadtree}
Let $V$ be a set of $n$ voters in $\mathbb R^2$. For any given $k$, player~\Pone
can place $k$ points such that \Pone wins at least $\left(1-\frac{6\frac{6}{7}}{k}\right)n$
voters against any single point placed by player~\Ptwo,
in the $L_1$-metric.
\end{restatable}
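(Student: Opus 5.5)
The proof has the same structure as Theorem~\ref{lem:refines}. It needs two ingredients: a block-counting lemma for the $L_1$-metric, and a refined point-placement rule whose parameters are tuned to the $L_1$ setting.

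\emph{Step 1 (the $L_1$ analogue of Lemmas~\ref{lem:vertices-in-P}--\ref{lem:three-child-regions}).} Lemma~\ref{lem:vertices-in-P} and Observations~\ref{obs:distance} and~\ref{obs:subset} are purely combinatorial, so they carry over unchanged. I would redo Observation~\ref{obs:plus} and Lemma~\ref{lem:five-child-regions} for $L_1$, which should only get easier.
\begin{itemize}
\item The key $L_1$ fact is the one used for the $\eps$-net lemma: a point $q$ of \Ptwo never wins voters in the quadrant of a point $p\in P$ opposite to the quadrant of $p$ containing $q$.
\item Applying this to the corners of cells in $P$ replaces the circle arguments of the $L_2$ proof by quadrant arguments.
\item From this I would deduce that $q$ wins voters from $B(q)$ and from at most one of $\Bl,\Br,\Bu,\Bd$, and in fact from at most two blocks overall. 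The reason is that the first corner of $\sigma(B(q))$ (or of the smallest relevant cell $\sigma'$, as in Case~I of Lemma~\ref{lem:three-child-regions}) shares a quadrant with $q$. That corner then blocks everything lying diagonally beyond it. The remaining configurations, where $q$ would win from, say, $\Bu$ and $\Bl$, are ruled out by the rectangle $r$ argument with the vertex $p_4\in P$: in $L_1$ this point dominates one of the two directions outright.
\item I expect this to require the same case split (type-I versus type-II blocks, and $q\in\sigma(\pa(\Br))$ or not) as the $L_2$ proof.
\end{itemize}
This step alone gives the bound $26n/k$ mentioned in the text.

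\emph{Step 2 (refined placement).} For each region $R\in\R$ with $n_R$ voters, $m<n_R\le 4m$, I would choose between two options.
\begin{itemize}
\item Treat $R$ as a single type-II block with its eight points, plus an $\eps$-net for axis-parallel rectangles of size $j$. The nets of sizes $1,\dots,5$ guarantee the bounds $1/2,2/5,1/3,2/7,1/4$ from~\cite{Aronov2009,dulieu2006,Rachek2020}.
\item Split $R$ into its child blocks (type~I, 13 points) and add a multidimensional median or small rectangle-nets in the heavy children.
\end{itemize}
By the lemma preceding this theorem and the star-shapedness argument, within one block \Ptwo wins at most $\eps\cdot(\text{block weight})$. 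The goal is a threshold value $x m$ such that every block in $\Bnew$ loses at most $xm$ voters, while each region satisfies $n_R/(\text{points placed})>c\,m$. The conclusion then follows exactly as in Theorem~\ref{lem:refines}: $m<n/(ck)$, and two blocks give at most $2xm<(2x/c)\,n/k$ voters to \Ptwo.

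\emph{Step 3 (choosing the intervals).} I must choose the subcase breakpoints for $n_R$ so that $2x/c=48/7$. For instance, one could take $x=1/2$ (multidimensional medians in child regions) and require $c=7/48$. The breakpoints are then determined by $n_R\cdot\eps_j\le xm$ in Case~A-type subcases, and by the point counts in Case~B-type subcases, exactly as in the $L_2$ refinement. The minimum ratio over all subcases must equal $c$; I would tabulate each subcase and confirm it.

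The main obstacle is Step~1: proving rigorously that only two blocks can be hit, especially the type-II situations analogous to Case~I of Lemma~\ref{lem:three-child-regions}. Step~3 is a finite optimisation whose only risk is choosing the breakpoints so that the worst subcase gives exactly $6\frac67$.
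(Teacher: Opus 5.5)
\textbf{The gap is in Steps~2--3, which carry all the quantitative content of the theorem.} Step~1 alone only gives $26n/k$. The constant $6\frac67$ is determined entirely by the per-block threshold $x$ and by which rectangle nets you have.

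\emph{Your sample choice $x=1/2$, $c=7/48$ is infeasible.} With $c=7/48$, a region with $n_R$ just above $m$ may use fewer than $\frac{48}{7}\cdot\frac{n_R}{m}\approx 6.9$ points. But such a region already needs the eight type-II points plus a net with $\epsr{j}n_R\le m/2$, i.e.\ $j\geq 2$, so at least ten points.

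\emph{More fundamentally, with only the nets of size at most five that you cite, your accounting $2x/c$ cannot reach $48/7$ for any threshold.}
\begin{itemize}
\item Near $n_R=m$, a region costs at least $8+j$ points, where $j$ is the least index with $\epsr{j}<x$. One checks that $2x(8+j)>48/7$ for every $x>2/7$.
\item For $x\leq 2/7$, the type-II option covers only $n_R\leq 4xm$, since $\epsr{5}=1/4$. Just beyond that, take a type-I region whose heaviest child holds almost $m$ voters. That child needs a net with $\epsr{j}\le 2/7$, so the region costs at least $13+4=17$ points. Then $1/c\geq 17/(4x)$ and $2x/c\geq 8.5$.
\end{itemize}
So Step~3 is not a routine tabulation of breakpoints; an ingredient is missing.

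\textbf{What the paper does.} It takes a much smaller threshold, $x=1/6$, and uses larger rectangle nets obtained by recursively applying Lemma~4.2 of Aronov~\etal: $\epsr{10}=\frac16$, $\epsr{12}=\frac2{13}$, $\epsr{14}=\frac17$, $\epsr{16}=\frac2{15}$, $\epsr{17}=\frac18$, $\epsr{19}=\frac2{17}$.
\begin{itemize}
\item Regions with $n_R\leq\frac{17}{12}m$ become a single type-II block with $8+i$ points, $i\in\{12,14,16,17,19\}$.
\item Heavier regions become four type-I blocks: the 13 points plus a net of size at most ten in each child. This suffices because a child holds at most $m$ voters and $\epsr{10}=\frac16$.
\end{itemize}
The worst case is the 24-point configuration for $n_R\in(\frac76m,\frac54m]$, i.e.\ $\frac{144}{7}$ points per $m$ voters. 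Hence $m<\frac{144}{7}\cdot\frac nk$, and \Ptwo wins at most $2\cdot\frac16m<\frac{48}{7}\cdot\frac nk$. In your notation this is $x=\frac16$, $c=\frac7{144}$.

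\textbf{Step~1 can be simplified.} You need not redo the $L_2$ case analysis. In $L_1$, a point $q$ that wins any voter of $\sigma(B)$ must lie in a diamond around $\sigma(B)$, and three of the placed points then confine all of its voters to $\plus(\sigma(B))$. Apply this to the smallest hit cell $\sigma(\Br)$ and check that no other candidate block meets the square immediately left of $\sigma(\Br)$. This gives the two-block bound directly.
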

In the remainder of this section we will prove Theorem~\ref{thm:L1-quadtree} by analyzing
the quadtree-based strategy in the $L_1$-metric. The analysis is similar
to the analysis in the Euclidean metric, but the bounds are better due to the fact
that distances are now measured in the $L_1$-metric.

We start by proving a stronger version of Observation~\ref{obs:plus}.
Recall that set $P$ of points placed by \Pone are generated based on
a collection~$\B$ of blocks with properties \ba and \bb; see also
Fig.~\ref{fig:compressed-quadtree}.

\begin{figure}[ht]
    \centering
    \includegraphics{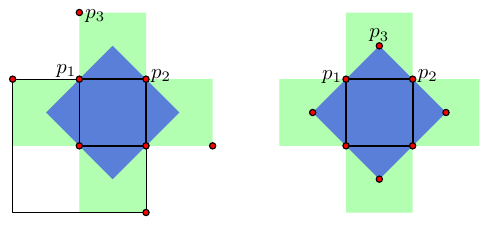}
    \caption{A type~I block (left) and a type-II block (right).
    For both, a point $q$ that wins voters from the block must lie in the blue diamond shape and can only win further voters from the green plus shape.}
    \label{fig:l1_plus}
\end{figure}

\begin{lemma}\label{lem:l1_plus}
If $q$ wins a voter that lies in $\sigma(B)$, then any other voter won by $q$ must lie in $\plus(\sigma(B))$.
\end{lemma}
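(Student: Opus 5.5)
We prove the lemma in two stages, working in the $L_1$-metric throughout. First we locate $q$; then we bound which voters $q$ can win. By property \bb, $P$ contains the four corners $c_1,\dots,c_4$ of $\sigma(B)$. Normalize so that $\sigma(B)=[0,1]^2$.

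\emph{Locating $q$.} Suppose $q$ wins a voter $v\in\sigma(B)$. Then $\|q-v\|_1<\min_i\|c_i-v\|_1$. For any $v$ in the unit square, the nearest corner is at $L_1$-distance at most $1$, with equality only at the center. So $q$ lies within $L_1$-distance $1$ of some point of $\sigma(B)$. More precisely, $q$ must satisfy
\[
\|q-v\|_1<\min_i\|c_i-v\|_1\leq 1
\]
for some $v\in[0,1]^2$. This confines $q$ to a diamond-like region around $\sigma(B)$ (the blue region in Fig.~\ref{fig:l1_plus}). We would actually show the sharper statement that $q$ lies in the $L_1$-ball of radius $1$ centered at the center of $\sigma(B)$. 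To see this, note that if $q$ is, say, to the right of the line $x+y=2$, then for every $v\in\sigma(B)$ the corner $(1,1)$ is at least as close to $v$ as $q$ is. This holds because in $L_1$ the corner $(1,1)$ lies ``between'' $q$ and the square in the monotone sense: $\|q-v\|_1=\|q-(1,1)\|_1+\|(1,1)-v\|_1$ whenever $q$ dominates $(1,1)$ coordinatewise. The same monotone-path argument handles $q$ lying in a quadrant region beyond a corner, and the four lines $x\pm y=\text{const}$ cut out the diamond.

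\emph{Bounding the voters $q$ wins.} Now take any voter $w$ won by $q$ with $w\notin\plus(\sigma(B))$. Then $w$ lies in one of the four corner regions, say $w_x\geq 1$ and $w_y\geq 1$ with $(w_x,w_y)$ not in the side squares. More generally, if $w$ lies in the closed quadrant $\{x\geq1,y\geq1\}$, then $\|c-w\|_1\leq\|q-w\|_1$ for $c=(1,1)$ whenever $q$ lies in the opposite quadrant $\{x\leq 1,y\leq1\}$. This is the key $L_1$ fact, already used in the $\eps$-net paragraph: a point of $P$ separates opposite quadrants. So $q$ must itself lie in $\{x>1\}\cup\{y>1\}$, i.e.\ outside $\sigma(B)$ toward that corner. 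Combined with the diamond constraint, $q$ lies in a small triangle near the corner $(1,1)$ outside the square. We then have to rule out that such a $q$ beats the extra points. Here type matters. For a type-I block, the 13-point pattern of the parent contains the neighboring cell corners $(2,1)$ or $(1,2)$, and these again separate $w$ from $q$ by the same quadrant argument. For a type-II block, the eight points of Fig.~\ref{fig:compressed-quadtree}(iii) include the midpoints of the edges or the outside points, and we check the two cases symmetrically.

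\emph{Main obstacle.} The main difficulty is this last check, which depends on the exact positions of the eight and thirteen points. It also requires handling $w$ in the far regions of the corner quadrants, but not in the side arms of the plus, against $q$ near the diamond tips. I would first try to show, for each case, that some point $p\in P$ satisfies $q\leq p\leq w$ coordinatewise in the appropriately reflected order. This gives $\|p-w\|_1\leq\|q-w\|_1$ immediately, so $w$ goes to \Pone because ties are won by \Pone.
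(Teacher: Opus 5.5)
\textbf{There is a genuine gap: your case split omits the voters beyond the arm tips of the plus, which are the only hard case.}

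Your first stage, confining $q$ to the diamond around $\sigma(B)$, matches the paper's proof. The second stage, however, rests on a false case split. With $\sigma(B)=[0,1]^2$, a voter $w\notin\plus(\sigma(B))$ need not lie in a closed corner quadrant such as $\{x\geq 1,\ y\geq 1\}$. The complement of the plus also contains the four regions beyond the tips of its arms, e.g.\ $[0,1]\times(2,\infty)$ and $(2,\infty)\times[0,1]$.

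These are exactly the voters for which the corners of $\sigma(B)$ do not suffice. For instance, $q=(\tfrac12,1.4)$ lies in the diamond and is at $L_1$-distance $1.1$ from $w=(\tfrac12,2.5)$, whereas both top corners of $\sigma(B)$ are at distance $2$ from $w$. Likewise, $q=(\tfrac12,1-\delta)\in\sigma(B)$ beats all four corners on $w=(\tfrac12,3)$. The paper excludes such voters in two ways: with the additional point $p_3$ of the type-I/type-II pattern when $q$ lies in the triangle above $p_1p_2$, and with Observation~\ref{obs:plus} when $q\in\sigma(B)$.

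Your proposal never treats these voters, not even in the case $q\in\sigma(B)$. Your intended bounding-box tool would not settle them either, because the relevant point of $P$ need not lie in the box spanned by $q$ and $w$. One needs a direct $L_1$ comparison instead. For example, if the point above the top edge is $(\tfrac12,\tfrac32)$, its $L_1$-distance to $w$ is at most that of $q$, because $|w_x-\tfrac12|-|w_x-q_x|\leq|q_x-\tfrac12|<\tfrac32-q_y$.

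Conversely, the part you did plan is easier than you expect: for $w$ in a corner quadrant, no type-specific points are needed. Every $q$ in the open diamond satisfies $q_x+q_y<2$. So for $w_x,w_y\geq 1$ we get $\|q-w\|_1\geq(w_x-q_x)+(w_y-q_y)>w_x+w_y-2=\|(1,1)-w\|_1$, and the corner wins. Your sense of where the type of $B$ matters is therefore inverted.

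Your justification of the diamond is also incomplete. A point like $(1.8,\tfrac12)$ has $x+y>2$ but neither dominates $(1,1)$ nor lies in a quadrant beyond a corner. The clean argument is this: the nearest corner to $v\in\sigma(B)$ is at distance exactly $1-\|v-o\|_1$, where $o$ is the center of $\sigma(B)$. So $\|q-v\|_1<1-\|v-o\|_1$, and the triangle inequality gives $\|q-o\|_1<1$.
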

\begin{proof}
If $q \in \sigma(B)$, then this holds by Observation~\ref{obs:plus}.
To handle the case $q\not\in\sigma(B)$, observe that in order to win a voter
in $\sigma(B)$, the point~$q$ must lie in the diamond shape shown in blue
in Fig.~\ref{fig:l1_plus}. Assume wlog that $q$ lies in the blue triangle above $p_1 p_2$
in the figure. Then it is easy to see that $q$ cannot win voters outside $\plus(\sigma(B))$
due to the points $p_1,p_2,p_3$ placed by \Pone.
\end{proof}

Lemma~\ref{lem:five-child-regions} still holds because it uses no properties of the $L_2$-metric other than the triangle inequality.
Combined with  Lemma~\ref{lem:l1_plus}, this lets us prove a stronger version of Lemma~\ref{lem:three-child-regions}:
\begin{lemma}\label{lem:l1_noopposite}
Point $q$ can win voters from at most two of the blocks $B(q)$, $\Bl$, $\Br$, $\Bu$, and~$\Bd$.
\end{lemma}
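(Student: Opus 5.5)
The plan is to reduce everything to a simple dichotomy. Any block from which $q$ wins a voter confines all of $q$'s other winnings to a plus shape, by Lemma~\ref{lem:l1_plus}. I then show that this confinement, together with the positions of the candidate blocks relative to $q$, rules out a third block.

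By Lemma~\ref{lem:five-child-regions}, which carries over to $L_1$, the only candidates are $B(q),\Bl,\Br,\Bu,\Bd$. Suppose for contradiction that $q$ wins voters from three of them. Among those three, let $B^*$ be the one whose cell $\sigma(B^*)$ has smallest size. By Lemma~\ref{lem:l1_plus}, every voter won by $q$ lies in $\plus(\sigma(B^*))$. Since the cells of the other two blocks have size at least $\size(\sigma(B^*))$, Observation~\ref{obs:subset} together with the quadtree structure (Observation~\ref{obs:distance}) forces each of them, restricted to $\plus(\sigma(B^*))$, to lie in one of the four arm squares of the plus, adjacent to $\sigma(B^*)$.

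\textbf{Case 1: $B^*\neq B(q)$.} Say $B^*=\Br$; the other directions are symmetric. Then $q$ lies in the blue diamond of Fig.~\ref{fig:l1_plus}, to the left of $\sigma(\Br)$, and in fact in one of its two triangles adjacent to the left side, say the upper one, with apex between $p_1$ and $p_2$. The proof of Lemma~\ref{lem:l1_plus} already shows that from there, points $p_1,p_2,p_3$ (corners and midpoints of the relevant cells) cut $q$'s $L_1$-Voronoi cell down to the left arm square plus the part of $\sigma(\Br)$ reachable through its left edge. I will make this more precise. Consider the axis-parallel lines through $p_1$ and $p_2$, and use the $L_1$ fact that $q$ cannot win voters in the quadrant of a point $p\in P$ opposite to $q$. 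This shows that $q$ can win voters only in $\sigma(\Br)$ and the left arm square, and moreover only in the upper half of the left arm square. That upper half lies in $\sigma(B(q))$'s neighbourhood, where one further block meets it: either $B(q)$ itself or one of $\Bu,\Bl$. The remaining block would have to reach the left arm square from above or from the left. But the corner points of the arm square in $P$ (property \bb and Lemma~\ref{lem:vertices-in-P}) block any voter beyond $q$'s horizontal or vertical line, again by the opposite-quadrant argument. Hence only one additional block is possible, giving at most two in total.

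\textbf{Case 2: $B^*=B(q)$.} All four corners of $\sigma(B(q))$ lie in $P$. If $q$ is in, say, the \NE-quadrant of $\sigma(B(q))$, then the opposite-quadrant property applied to the corners and to the centre region shows that $q$ can only win voters in $\sigma(B(q))$ and in the up and right arm squares. To rule out winning from both $\Bu$ and $\Br$ simultaneously, I use the \NE\ corner $c$ of $\sigma(B(q))$. A voter $v$ in the up arm lies above $c$, and a voter $v'$ in the right arm lies to its right. Winning $v$ in $L_1$ requires $\|q-v\|_1<\|c-v\|_1$. Since $v$ is above the top edge, this means $v_x$ lies strictly left of the midpoint of $q_x$ and $c_x$, i.e.\ in a region that, in $L_1$, forces the vertical distance to dominate. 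Symmetrically for $v'$. I expect these two conditions to be compatible in $L_1$ only if $q=c$, which is excluded by the tie-breaking rule.

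The hard part will be this last exclusion. Unlike $L_2$, the $L_1$ bisector of $q$ and $c$ has a diagonal segment and two axis-parallel rays, so I must check carefully that the bisector region won by $q$ cannot reach beyond both extensions of the top and right edges at once. I would try a direct coordinate computation with $c=(0,0)$ and $q=(-a,-b)$, $a,b>0$. The first thing to verify is that $q$ wins $(x,y)$ with $y>0$ only when $x<-a$ or roughly that, and symmetrically for $x>0$. This confines $q$'s cell to the \SW\ side of the $L_1$ bisector, which meets the top-edge extension only left of $c$ and the right-edge extension only below $c$. Combined with the fact that arm blocks lie outside $\sigma(B(q))$, this should give at most one of $\Bu,\Br$.
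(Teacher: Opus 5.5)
\textbf{There is a genuine gap in your Case~1.} The decisive step there is the claim that, apart from $\Br$, at most one block can supply voters won by $q$ in the square $\sigma$ of size $\size(\sigma(\Br))$ directly left of $\sigma(\Br)$. You justify this by saying the corners of that arm square lie in $P$. But property \bb and Lemma~\ref{lem:vertices-in-P} only give corners of block cells and vertices of blocks, and $\sigma$ is in general not a block cell. For example, a type-II block $\Br$ contributes only the centre of $\sigma$ to the left of its left edge, so the far corners of $\sigma$ need not be in $P$.

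Your claim that $q$ reaches only the upper half of $\sigma$ is also false, though you do not need it. Take $\sigma(\Br)=[0,1]^2$ of type~II, $q=(-0.1,0.55)$ and $v=(-0.15,0.45)$. Then $\|q-v\|_1=0.15$, while every point placed for $\Br$ is at $L_1$-distance at least $0.4$ from~$v$.

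You came close with your remark that the larger blocks meet $\plus(\sigma(B^*))$ only in arm squares. The missing step is that two distinct such blocks cannot meet the same arm square:
\begin{itemize}
\item $\sigma$ is a square of the quadtree grid at the level of $\sigma(\Br)$.
\item So any block whose cell has size at least $\size(\sigma)$ and contains a voter of $\sigma$ has a cell containing $\sigma$.
\item Two such cells are nested, and Observation~\ref{obs:subset} then says the block of the larger cell misses the smaller cell, hence misses $\sigma$.
\end{itemize}
All blocks you count have cells at least as large as $\sigma(\Br)$, and (as you say) from the left triangle $q$ wins only voters in $\sigma(\Br)\cup\sigma$. So at most one block besides $\Br$ remains. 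This is the paper's argument, phrased there as: $\sigma\subseteq\sigma(B(q))$, so none of $\Bl,\Bu,\Bd$ can meet $\sigma$.

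\textbf{Your case split also leaves a configuration uncovered.} Because $B^*$ is chosen only among the blocks $q$ \emph{wins from}, if $q$ wins nothing from $B(q)$ you can land in Case~1 with $\sigma(B(q))\subsetneq\sigma(\Br)$ and $q\in\sigma(\Br)$; this happens when the ray $\Rr$ leaves $\sigma(B(q))$ directly into $\Br$. Then $q$ is not in the left triangle of the diamond, and your Case~1 reasoning does not apply. Split instead, as the paper does, on whether $\sigma(B(q))$ is no larger than every cell of a block among $\Bl,\Br,\Bu,\Bd$ from which $q$ wins, regardless of whether $q$ wins from $B(q)$ itself. Your Case~2 argument then covers this configuration too.

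That Case~2 argument is sound. With $c=(0,0)$ and $q=(-a,-b)$, beating $c$ on a voter with $y>0$ forces $b<a$, and beating it on a voter with $x>0$ forces $a<b$. So the two cannot both occur; this is the paper's partition of $\sigma(B(q))$ by its diagonals.
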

\begin{proof}
First suppose that the size of $\sigma(B(q))$ is at most the size of any of the
four cells $\sigma(\Bl),\ldots,\sigma(\Bd)$ from which $q$ wins voters. By Observation~\ref{obs:subset},
this implies that all four blocks $\Bl,\Br,\Bu,\Bd$ lie outside~$\sigma(B(q))$.
Indeed, if $\Bl$ does not lie outside $\sigma(B(q))$, for instance, then it must be that $\sigma(B(q)) \subset \sigma(\Bl)$, but then Observation~\ref{obs:subset} implies $\Bl \cap \sigma(B(q)) = \emptyset$.
It can be seen in Fig.~\ref{fig:l1_two_regions}(i)
that $q$ can only win voters from one of the four blocks $\Bl,\Br,\Bu,\Bd$,
because all four corners of $\sigma(B(q))$ are in~$P$ by~\bb.

Now suppose that $\sigma(B(q))$ is larger than $\sigma(\Br)$, which we assume without
loss of generality to be a smallest cell from which $q$ wins voters
among the four cells $\sigma(\Bl),\ldots,\sigma(\Bd)$. 
Let $\sigma$ be the square of the same size as $\sigma(\Br)$ placed directly left of $\sigma(\Br)$, as in Fig.~\ref{fig:l1_two_regions}(ii).
None of $\Bl,\Bu,\Bd$ can intersect $\sigma$.
To show this, assume there is a block $B$ that does intersect~$\sigma$.
Then, either $\sigma \subset \sigma(B) \subset \sigma(B(q))$ or $\sigma \subset \sigma(B(q)) \subset \sigma(B)$.
However, by Observation~\ref{obs:subset} that means that
in the first case $B(q) \cap \sigma(B) = \emptyset$, but $q \in \sigma \subset \sigma(B)$ by Observation~\ref{obs:plus} so this is incompatible with the definition of $B(q)$.
In the second case $B \cap \sigma(B(q)) = \emptyset$, which contradicts the assumption that $B$ intersects $\sigma$.
Thus, none of $\Bl,\Bu,\Bd$ can intersect $\sigma$.
Now, Lemma~\ref{lem:l1_plus} implies that $q$ cannot win voters in $\Bl$, $\Bu$ or $\Bd$.
\end{proof}

\begin{figure}[t]
    \centering
    \includegraphics{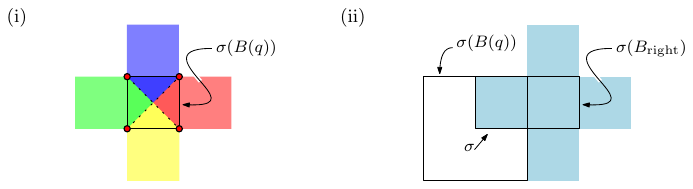}
    \caption{(i) When $q$ is in a coloured triangle of $\sigma(B(q))$, it can only win voters from the same-coloured square. (ii) Definition of $\sigma$, compared to $\sigma(B(q))$ and $\sigma(\Br)$.}
    \label{fig:l1_two_regions}
\end{figure}

We can now prove Theorem~\ref{thm:L1-quadtree}.

\begin{proof}[Proof of Theorem~\ref{thm:L1-quadtree}]
As mentioned, the proof is similar to that for Theorem~\ref{lem:refines}:
we will use the quadtree technique and then add a case distinction 
to handle each region in a more refined manner.

Recall that in the $L_2$-metric, \Ptwo could win voters from at most three blocks, 
which implied that  \Ptwo could only win  $39 \frac{n}{k}$ voters if
\Pone uses the basic quadtree strategy. 
Lemma~\ref{lem:l1_noopposite} tells us that in the $L_1$-metric,
\Ptwo can win voters from at most two blocks. Using a similar analysis
as before, this now implies that \Ptwo can only win  $26 \frac{n}{k}$ voters if
\Pone uses the basic quadtree strategy. 
Next, we show how to improve this bound by incorporating $\eps$-nets.
\medskip

Let $\epsr{k}$ be such that there exists an $\epsr{k}$-net for axis-parallel rectangles
consisting of $k$ points. 
As established~\cite{Aronov2009,dulieu2006,Rachek2020}, we have $\epsr{k} \leq \frac{2}{k+3}$ for $k \leq 5$.
For larger $k$, the current best values are achieved by recursively applying Lemma~4.2 
of Aronov~\etal\cite{Aronov2009}; we show those used in Table~\ref{tab:epsr}.
(Note: if, as conjectured, $\epsr{k} \leq\frac{2}{k+3}$, then it is better to not use the quadtree technique at all.)

\begin{table}[bt]
    \centering
    \begin{tabular}{c|c|c|c|c|c|c|c|c|c|c|c|c|c}
        $k$ & 1 & 2 & 3 & 4 & 5 & 7 & 8 & 10 & 12 & 14 & 16 & 17 & 19 \\ \hline
        $\epsr{k}$ & $1/2$ & $2/5$ & $1/3$ & $2/7$ & $1/4$ & $2/9$ & $1/5$ & $1/6$ & $2/13$ & $1/7$ & $2/15$ & $1/8$ & $2/17$
    \end{tabular}
    \caption{The best known values $\epsr{k}$ such that there exists an $\epsr{k}$-net of size $k$. The value for $k \leq 20$ is only shown if it improves upon $k-1$.}
    \label{tab:epsr}
\end{table}

To make optimal use of the currently known bounds on $\epsr{k}$, we will
enforce that \Ptwo can only win $\frac16 m$ voters from any block.
Let $R \in \R$ be a region that contains $n_R$ voters. Then player \Pone will
place points as follows.
\begin{itemize}
\item
If $n_R \leq \frac16m / \epsr i$ for some $i \leq 19$, \Pone places eight points to 
make $R$ a type-II block and $i$ points according to an $\epsr i$-net, using $8+i$ points in total.
This ensures that \Ptwo can only win $\frac16m$ voters from~$R$.
The following table shows the bounds that we obtain.  \\
    \begin{center}
    \begin{tabular}{l|c|c|c|c|c}
        Number of points ($=8+i$) & 20 & 22 & 24 & 25 & 27 \\ \hline
        min.\ voters covered & $m$ & $1\frac1{12}m$ & $1\frac16m$ & $1\frac14m$ & $1\frac13m$ \\
        max.\ points per $m$ voters & 20 & $20\frac4{13}$ & $20\frac47$ & 20 & $20\frac14$ \\[3mm]
    \end{tabular}
    \end{center}
The table skips some possible configurations, such as the one with $21$ points.
This is because limiting us to the shown cases does not give a worse overall bound but prevents the table from getting overly large.
For any configuration, the minimum number of voters it covers is determined by the maximum number of voters the previous one can handle, e.g.\ $1\frac1{12}m = \frac16m / \epsr{12}$.
Note that we always have $\frac{\mbox{\tiny number of points placed}}{\mbox{\tiny number of voters in $R$}} \leq 20\frac47$.
\item 
If $n_R > \frac16m / \epsr{19} = 1\frac5{12}m$, player~\Pone places 13 points to make 
$R$ a type-I block and places further points as $\epsilon$-nets in each child region of $R$ to ensure at most $\frac{1}{6} m$ voters can be won from it.
Since a child region contains at most $m$ voters and $\epsr{10} \leq 1/6$, we will use $\eps$-nets of size up to ten.
Now let's determine a lower bound on the smallest possible number of voters in~$R$ when \Pone uses $13+i$~points.
If child region $j$ uses $i_j$ points, then it must contain more than $\frac16m / \epsr{i_j-1}$ voters, where we define $\epsr{i_j-1} := \infty$.
Based on this, we define
\[
f(i) := \min\left\{ \sum_{j=1}^4 \frac16m / \epsr{i_j-1}\ \mid \ \sum_{j=1}^4 i_j = i\text{ and }i_j \leq 10\text{ for all }j \right\}.
\]
When \Pone uses $13+i$ points for region~$R$, then $R$ must contain more than $f(i)$ voters.
In other words, $13+i$ points always suffice for up to $f(i+1)$ voters.

The table below shows the bounds this gives, where we can again skip many configurations.
When \Pone places $29$ points, the minimum number of voters covered comes directly from the condition $n_R > 1\frac5{12}m$.
In this case, we also know that $R$ contains at most $f(17)$ voters.
From the equation above, we can deduce that $f(17) = 1\frac12m$, which is achieved by taking $i_1=10$, $i_2=7$ and $i_3=i_4=0$.
Thus, \Pone only places 30 points when $R$ contains more than $1\frac12m$ voters.
Similarly, these 30 points suffice for up to $f(18) = 1\frac7{12}m$ voters (given by $i_1=10, i_2=8, i_3=i_4=0$) and hence when \Pone places 32 points, $R$ contains more than $1\frac7{12}m$ voters.
This continues until we reach $53$, which is $13 + 4 \cdot 10$ and therefore works for any number of voters.
\\
    \begin{center}
    \begin{tabular}{l|c|c|c|c|c|c|c}
        Number of points & 29 & 30 & 32 & 34 & 41 & 51 & 53 \\ \hline
        $i_1, i_2, i_3, i_4$ & & {\tiny 10,7,0,0} & {\tiny 10,8,0,0} & {\tiny 10,10,0,0} & {\tiny 10,10,2,0} & {\tiny 10,10,9,0} & {\tiny  10,10,10,9}\\
        min.\ voters covered & $1\frac5{12}m$ & $1\frac12m$ & $1\frac7{12}m$ & $1\frac23m$ & $2m$ & $2\frac12m$ & $3\frac13m$ \\
        max.\ points per $m$ voters & $20\frac8{17}$ & 20 & $20\frac4{19}$ & $20\frac25$ & $20\frac12$ & $20\frac25$ & $15\frac9{10}$ \\[2mm]
    \end{tabular}
    \end{center}
\end{itemize}
Thus in both cases, the total number of points is $k < 20\frac47 \cdot \frac nm$, meaning $m < 20\frac47 \cdot \frac nk$.
Hence, player~\Ptwo can win at most $\frac13m < 6\frac67 \cdot \frac nk$ voters.
\end{proof}
\section{Conclusion}\label{ch:conclusion}
We studied the discrete one-round Voronoi game where player~\Pone 
can place $k>1$ points and player~\Ptwo can place a single point. We improved the
existing bounds on the number of voters player~\Pone can win. For small $k$
this was done by proving new bounds on $\epsilon$-nets for convex ranges.
For large $k$ we used a quadtree-based approach, which uses the $\eps$-nets
as a subroutine. 
The main open problem is: Can player~\Pone always win at least half
the voters in the $L_2$-metric by placing less than four points?

\bibliographystyle{plainurl}
\bibliography{literature-new}

\end{document}